\documentclass[smallextended,envcountsect,envcountsame,envcountreset,12pt]{svjour3}       

\smartqed  
\usepackage{graphicx}
%
%
\usepackage{amsmath}
\usepackage{amsfonts}
\usepackage{amssymb}
\usepackage{graphicx}
%
%
%

\begin{document}

\title{Bounding the Size of a Network Defined By Visibility Property}
\subtitle{}


\author{Andreas D. M. Gunawan         \and
        Louxin Zhang 
}


\institute{Andreas Dwi Maryanto Gunawan \and Louxin Zhang \at
	Department of Mathematics, National University of Singapore, Block S17, Lower Kent Ridge Road 10,
Singapore 119076. \\
	\email{a0054645@u.nus.edu}       
}

\date{Received: date / Accepted: date}

\maketitle

\begin{abstract}
Phylogenetic networks are mathematical structures for modeling and 
visualization of reticulation processes in the study of evolution. 
Galled networks,  reticulation visible networks, nearly-stable networks and stable-child networks are the four classes of phylogenetic networks that are recently introduced to study the topological and algorithmic aspects of phylogenetic networks. We prove the following results.   
\begin{itemize}
\item[ ~~~~(1)] A binary galled network with $n$ leaves has at most $2(n-1)$ reticulation nodes.
\item[ ~~~~(2)] A binary nearly-stable network with $n$ leaves has at most $3(n-1)$ reticulation nodes.
\item[ ~~~~(3)] A binary stable-child network with $n$ leaves has at most $7(n-1)$ reticulation nodes.
\end{itemize}

\keywords{phylogenetic network \and galled network \and reticulation visibility \and nearly-stable property \and stable-child property}
\end{abstract}

\section{ Introduction}
\label{intro}

Reticulation processes refer to the transfer of genetic material between living organisms in a non-reproduction manner. 
Horizontal gene transfer is  believed to be a highly significant reticulation process occurring between single-cell organisms (Doolittle and Bapteste 2007; Treangen and Rocha 2011).
Other reticulation processes include introgression, recombination and hybridization (Fontaine et al. 2015; McBreen and Lockhart 2006;  Marcussen et al. 2014). In the past two decades, phylogenetic networks have often been seen for the modeling and visualization of  reticulation processes (Gusfield 2014; Huson et al. 2011).

Galled trees, galled networks, reticulation visible networks  are  three of the popular classes of phylogenetic networks introduced to study the combinatorial and algorithmic perspectives of phylogenetics (Wang et al. 2001; Gusfield et al. 2004;  Huson and Kloepper 2007; Huson et al. 2011).  
Reticulation visible networks include galled trees and galled networks. They are tree-based
(Gambette et al.  2015).
The tree-based networks are introduced by Francis and Steel (2015) recently.

It is well known that the number of internal nodes in a phylogenetic tree with $n$ leaves is $n-1$. In contrast, an arbitrary phylogenetic network with 2 leaves can have as many internal nodes as possible. 
Therefore, one interesting research problem is how large a phylogenetic network in a particular class can be.  For example, it is well known  that a tree-child network with $n$ leaves has $3(n-1)$ non-leaf nodes at most. A regular network with $n$ leaves has 
$2^{n}$ nodes at most (Willson 2010).  To investigate whether or not the tree containment problem is polynomial time solvable, surprisingly,  Gambette et al. (2015) proved that a reticulation visible network with 
$n$ leaves has at most $9(n-1)$ non-leaf nodes. The class of nearly-stable networks was also introduced in their paper.   They also proved the existence of a linear upper bound on the number of reticulation nodes in a nearly-stable network. 

In the present paper, we establish the tight upper bound for the size of a network defined by a visibility property using a sub-tree technique that was introduced in Gambette et al. (2015). The rest of this paper is divided into six sections.  Section~\ref{sec:Basic} introduces  concepts and notation that are necessary for our study. 
Recently,  Bordewich and Semple (2015) proved that there are at most $3(n-1)$ reticulation nodes in a reticulation visible network.  In Section~\ref{sec:Stable}, we present a different proof of the $3(n-1)$ tight bound for reticulation visible networks.   
Section~\ref{sec:Galled}  proves  that  there are at most $2(n-1)$ reticulation nodes in a galled networks with $n$ leaves. 
Section \ref{sec:NearlyStable} and \ref{sec:StableChild} establish the tight upper bounds for the sizes of nearly-stable and stable-child networks, respectively. In Section~\ref{conc}, we conclude the work with a few remarks.

\section{ Basic concept}
\label{sec:Basic}
\subsection{Phylogenetic Networks}

An acyclic digraph is a simple connected digraph with no directed cycles.

Let $D=({\cal V}(D), {\cal E}(D))$ be an acyclic digraph and  let $u$ and $v$ be two nodes in $D$. If $(u, v)\in {\cal E}(D)$,  it is called an \emph{outgoing} edge of $u$ and \emph{incoming edge}  of $v$; $u$ and $v$ are said to be the tail and head of the edge. The numbers of incoming and outgoing edges of a node are called its \emph{indegree} and \emph{outdegree}, respectively. 
$D$ is said to be \emph{rooted} if there is a unique node $\rho(D)$  with indegree 0; $\rho(D)$ is called the \emph{root} of $D$.  Note that  in a rooted acyclic digraph there exists a directed path from the root to every other node. 

For  $E \subseteq \mathcal{E}(D)$,  $D-E$ denotes  the digraph with the same node set and the edge set $ \mathcal{E}(D) - E$. 
For $V \subseteq \mathcal{V}(D)$,  $D-V$ denotes  the digraph with the node set  $\mathcal{V}(D)-V$ and the edge set 
$\{(u,v) \in \mathcal{E}(D) | u\not\in V \mbox{ and } v \notin V  \}$.
If $D'$ and $D''$ are  subdigraphs of $D$,  $D' + D''$ denotes the subdigraph  with the node set $\mathcal{V}(D') \cup \mathcal{V}(D'')$ and the edge set $\mathcal{E}(D') \cup \mathcal{E}(D'')$.

A \textit{phylogenetic network} on a finite set  of taxa, $X$,  is a rooted acyclic digraph in which  each non-root node has either indegree 1 or outdegree 1 and there are exactly $|X|$ nodes of outdegree 0 and indegree 1, called \emph{leaves}, that correspond one-to-one with the taxa in the network.  

In a phylogenetic network, a node is called a \emph{tree node} if it is either the root or a node having indegree one; it is called a \emph{reticulation node} if its indegree is greater than one.
Note that leaves are tree nodes and  a tree node may have both indegree and outdegree one. A non-leaf node is said to be {\it internal}. 
A phylogenetic network without reticulation nodes is simply a  \textit{phylogenetic tree}. 

For a phylogenetic network $N$, we use the following notation:
\begin{itemize}
\item   $\rho(N)$:  the  root of $N$.
\item  $\mathcal{V}(N)$:  the set of nodes.
\item  $\mathcal{T}(N)$: the set of tree nodes.
\item  $\mathcal{R}(N)$: the set of reticulation nodes.
\item  $\mathcal{E}(N)$: the set of edges.
\item  $\mathcal{L}(N)$:  the set of leaves.
\end{itemize}
For two nodes $u, v$ in $\mathcal{V}(N)$,  if $(u, v)\in \mathcal{E}(N)$, $u$ is said to be a \emph{parent} of $v$ and, equivalently,  $v$ is a \emph{child} of $u$. 
In general, if there is a directed path from $u$ to $v$, $u$ is an  \emph{ancestor} of $v$ and $v$ is a \emph{descendant} of $u$. We sometimes say that $v$ is below $u$ when $u$ is an ancestor of $v$.

Let $P$ and $Q$ be two simple  paths from $u$ to $v$ in $N$. We use  $\mathcal{V}(P)$ and $\mathcal{V}(Q)$ to denote their node sets, respectively.   They are \emph{internally disjoint} if $\mathcal{V}(P) \cap V(Q) = \{u, v\}$.

Finally, a   phylogenetic network is \textit{binary}, if its root has outdegree 2 and indegree 0,  all internal nodes have degree 3, and all the leaves have indegree one. Here, we are interested in how large a binary phylogenetic network can be. 

In the rest of the paper, a binary phylogenetic work is simply called a network and a phylogenetic tree a tree. For sake of convenience for discussion, we also add an open edge entering the root of a network.

\subsection{Visibility Properties}

 A node $v$ in a  network is \emph{visible} (or stable) with respect to a leaf $\ell$ if $v$ is in every path from the network root to $\ell$. We say $v$ visible if it is visible with respect to some leaf in the network.

\begin{lemma}
	\label{Prop1} 
	Let $N$ be a network and $N'$  a subnetwork of $N$ with the same root and leaves as $N$. 
	Then, a node is visible in $N'$ if it is visible in $N$. Equivalently, a node is not visible in $N$ if it is not visible in $N'$.
\end{lemma}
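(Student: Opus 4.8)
The plan is to argue directly from the definition of visibility, exploiting the fact that passing to a subnetwork can only \emph{remove} root-to-leaf paths, never create new ones. The only point that needs a moment of care is to check that the visible node in question survives in $N'$ at all; this will come for free once we observe that $N'$, being rooted and acyclic, still contains a directed path from its root to the relevant leaf.

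First I would fix notation: write $\rho = \rho(N) = \rho(N')$ (these coincide by hypothesis) and suppose $v$ is visible in $N$, say with respect to a leaf $\ell \in \mathcal{L}(N)$; thus $v$ lies on \emph{every} directed path from $\rho$ to $\ell$ in $N$. The key elementary observation is that since $\mathcal{V}(N') \subseteq \mathcal{V}(N)$ and $\mathcal{E}(N') \subseteq \mathcal{E}(N)$, every directed path in $N'$ is, verbatim, a directed path in $N$.

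Next I would show $v \in \mathcal{V}(N')$. Because $N'$ is rooted and acyclic with root $\rho$, and $\ell \in \mathcal{L}(N') = \mathcal{L}(N)$, there is at least one directed path $P$ from $\rho$ to $\ell$ in $N'$ (recall that in a rooted acyclic digraph the root reaches every node). By the observation above, $P$ is also a directed path from $\rho$ to $\ell$ in $N$, so the visibility of $v$ in $N$ forces $v \in \mathcal{V}(P) \subseteq \mathcal{V}(N')$. Now let $Q$ be an \emph{arbitrary} directed path from $\rho$ to $\ell$ in $N'$; again $Q$ is a directed path from $\rho$ to $\ell$ in $N$, hence $v \in \mathcal{V}(Q)$. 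Since $Q$ was arbitrary, $v$ lies on every root-to-$\ell$ path in $N'$, i.e.\ $v$ is visible in $N'$ with respect to $\ell$. The second (``equivalently'') assertion is merely the contrapositive of the first, so nothing further is required.

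There is no serious obstacle here; the statement is essentially a bookkeeping consequence of the definitions. The two places to be slightly careful are (i) invoking that $N'$ is rooted and acyclic so that a $\rho$-to-$\ell$ path exists at all — otherwise the conclusion ``$v$ is a node of $N'$'' could fail — and (ii) using the hypothesis that $N$ and $N'$ literally share the same root, so that ``path from the root'' refers to the same starting node in both digraphs.
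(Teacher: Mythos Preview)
Your proof is correct and follows essentially the same approach as the paper: both argue that any root-to-$\ell$ path in $N'$ is already a root-to-$\ell$ path in $N$, and hence must contain $v$. Your version is slightly more careful in explicitly verifying that $v \in \mathcal{V}(N')$ via the existence of at least one such path, a point the paper leaves implicit.
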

\begin{proof}
	Suppose $v \in \mathcal{V}(N)$ is visible with respect to a leaf $\ell$ in $N$. 
For each path $P$ from $\rho(N')$ to $\ell$, since it is also a path from $\rho(N)$ to $\ell$ in $N$, it must pass through $v$. Thus, $v$ is also visible with respect to the same leaf in $N'$.
\qed
\end{proof}

\emph{Reticulation visible} networks are  networks in which reticulation nodes are all visible (Huson et al. 2011). They are also called stable networks by Gambette { et al.} (2015).

A network is \emph{galled} if  every reticulation node $r$ has an ancestor $a$ such that 
there are two disjoint tree paths from $a$ to $r$ (Huson and Kloepper 2007).  Here, a path is tree path if its internal nodes are all tree nodes in the network.  Galled networks are reticulation visible and  are also known as level-1 networks. 

 \emph{Nearly-stable} networks are  networks in which for every pair of nodes  $u$ and $v$, either $u$ or $v$ is visible if 
 $(u, v)$ is an edge (Gambette et al. 2015).
 
 \emph{Stable-child} networks  are  networks in which every node has a visible child.
 
 Tree-based networks comprise another interesting class of networks that is introduced recently (Francis and Steel 2015). A network is \emph{tree-based} if it can be obtained from a tree with the same leaves by the insertion of a set of edges between different edges in the tree.  
 
\begin{theorem}
\label{stable_1}
{\rm (Gambette et al. 2015)}
For every reticulation-visible network $N$, there exists a subset of edges $E\subseteq {\cal E}(N)$ such that $E$ contains exactly an incoming edge for each reticulation node and $N-E$ is a subtree with the same leaves as $N$. 
\end{theorem}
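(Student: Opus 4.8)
The plan is to assemble $E$ by a careful one-incoming-edge-per-reticulation choice, where the only real work is to guarantee that no internal node of $N$ loses all of its outgoing edges. The first step is a structural observation that uses reticulation visibility: in a binary reticulation-visible network, no reticulation node has a reticulation parent. Indeed, if a reticulation $r$ had a reticulation parent $p$, then $p$ would have outdegree $1$ with $r$ as its only child, so every leaf below $p$ lies below $r$; taking a leaf $\ell$ witnessing the visibility of $p$ and a second parent $q$ of $r$, acyclicity forces $p$ to be neither an ancestor of $q$ nor a descendant of $r$, so the walk from $\rho(N)$ through $q$, then $r$, then down to $\ell$ avoids $p$ (and a simple path avoiding $p$ can be extracted from it), contradicting the visibility of $p$.

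Next I would reduce the theorem to a combinatorial selection problem. For each reticulation $c$ one must keep exactly one incoming edge $(\varphi(c),c)$ and place the others in $E$. Since $E$ then consists only of edges entering reticulations, $N-E$ has $\rho(N)$ as its only node of indegree $0$, every other node of indegree exactly $1$, and no directed cycle, hence is automatically a spanning tree of $N$. The only thing that can go wrong is that $N-E$ acquires an internal node of outdegree $0$. A node with a non-reticulation child always keeps that edge; by the structural step the unique child of a reticulation is never a reticulation, so reticulations are safe; thus the only dangerous nodes are the tree nodes (and possibly $\rho(N)$) whose two children are both reticulations. Writing $B$ for this set, it suffices to choose, for each $v\in B$, one of its two reticulation children $\chi(v)$ so that $\chi$ is injective, and then set $\varphi(\chi(v))=v$ and let $\varphi$ be arbitrary on the remaining reticulations.

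To produce the injection $\chi$ I would invoke Hall's theorem on the bipartite incidence joining each $v\in B$ to its two children in $\mathcal{R}(N)$. Hall's condition follows from a one-line double count: for $B'\subseteq B$ the number of incident pairs is $2|B'|$, while each reticulation has indegree $2$ and hence occurs as a child of at most two members of $B'$, so $2|B'|\le 2\bigl|\bigcup_{v\in B'}\mathrm{children}(v)\bigr|$. Finally I would define $E=\{(p,c): c\in\mathcal{R}(N),\ p\text{ a parent of }c,\ p\neq\varphi(c)\}$ and check the two required properties: $E$ contains exactly one incoming edge of each reticulation by construction, and every internal node of $N$ retains an outgoing edge in $N-E$ (non-reticulation children are never deleted, and the reticulation child $\chi(v)$ of each dangerous node $v$ is preserved), so $N-E$ is a subtree of $N$ with the same leaf set.

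I expect the main obstacle to be the first, structural step: getting the quantifiers right in the visibility argument, since ``visible'' means lying on \emph{every} root-to-leaf path, and one has to reason with walks and then extract simple paths. Once ``no reticulation parents a reticulation'' is established, the remainder is routine — the spanning-tree property comes for free from the indegree count, and leaf-preservation is a short application of Hall's theorem.
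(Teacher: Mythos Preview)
The paper does not prove this theorem; it is quoted from Gambette et al.\ (2015) and used as a black box throughout, so there is no in-paper argument to compare your proposal against.

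Your proof is correct and self-contained. The structural lemma---that in a binary reticulation-visible network no reticulation can have a reticulation child---is the crux, and your visibility argument for it is sound: since $p$ has $r$ as its only child, $p$ cannot be an ancestor of the second parent $q$ of $r$ (that would force $r$ to be an ancestor of $q$, a cycle), so every root-to-$q$ path already avoids $p$, and concatenating with $(q,r)$ and any $r$-to-$\ell$ path yields a root-to-$\ell$ walk avoiding $p$. Once that lemma is in hand, the indegree bookkeeping (root has indegree $0$, every other node indegree exactly $1$, acyclic) forces $N-E$ to be a spanning tree automatically, and your Hall-type matching cleanly handles the only remaining danger---tree nodes both of whose children are reticulations---since each such node contributes two incidences and each reticulation absorbs at most two. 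The argument is complete as written.
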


Theorem~\ref{stable_1} indicates that every reticulation visible network is tree-based. 
However, nearly-stable networks and stable-child networks  are not necessarily  tree-based.

We finish this section by presenting a technical lemma that will frequently be  used in establishing the tight upper bound on the size of a network in each of the four classes defined above.

\begin{lemma}
	\label{lemma22}
	Let $N$ be a network,  $u \in {\cal V}(N)$, and $R$ be a finite set of  reticulation nodes  below $u$.
If  each $r\in R$ has a parent $p(r)$ such  that either {\rm (a)}  $p(r)$ is  below another $r'$ in $R$, or {\rm (b)} there is a path from $\rho (N)$ to $p(r)$ that avoids $u$, 
		then there exists a path from $\rho (N)$ to $\ell$ avoiding $u$ for every leaf $\ell$ below a reticulation node $r\in R$.
\end{lemma}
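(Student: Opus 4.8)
The plan is to argue by a minimal-counterexample / strong-induction argument on the structure "how far a reticulation node sits below $u$". Concretely, I would pick a leaf $\ell$ below some $r \in R$ and try to build a path from $\rho(N)$ to $\ell$ that avoids $u$; it suffices to build a path from $\rho(N)$ to $r$ avoiding $u$, since from $r$ we may simply append any directed path from $r$ down to $\ell$ (and $u$ cannot lie on that downward path, because $u$ is an ancestor of $r$ in an acyclic digraph, so $u$ cannot also be a proper descendant of $r$). So the real task is: for every $r \in R$, there is a $\rho(N)$--$r$ path avoiding $u$.

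First I would set up an ordering. Among all $r \in R$ for which no $\rho(N)$--$r$ path avoids $u$, pick one, call it $r_0$, that is \emph{lowest} in $N$ — i.e. no other such "bad" $r \in R$ is a proper descendant of $r_0$. (Formally: choose $r_0$ bad with the property that every bad $r' \in R$ strictly below $r_0$ — there are none by minimality — ; more carefully, order the bad nodes by the ancestor relation and take a minimal element, which exists since $N$ is finite and acyclic.) Now apply the hypothesis to $r_0$: its parent $p(r_0)$ satisfies (a) or (b). In case (b), there is a $\rho(N)$--$p(r_0)$ path avoiding $u$; appending the edge $(p(r_0), r_0)$ gives a $\rho(N)$--$r_0$ path, and it still avoids $u$ because $u \neq r_0$ (as $u$ is a strict ancestor of $r_0$, again by acyclicity) and $u$ is not on the chosen path to $p(r_0)$. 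This contradicts $r_0$ being bad.

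In case (a), $p(r_0)$ lies below some $r' \in R$. Here is where the choice of $r_0$ as a \emph{lowest} bad node does the work: I claim $r'$ is not bad. Indeed $r'$ is an ancestor of $p(r_0)$, hence an ancestor of $r_0$; if $r' = r_0$ that would make $r_0$ a strict ancestor of itself via the path $r_0 \to \cdots \to p(r_0) \to r_0$, impossible by acyclicity, so $r'$ is a \emph{strict} ancestor of $r_0$, and in particular $r' \ne r_0$. But then, is $r'$ below $r_0$? No — $r'$ is strictly above $r_0$. Hence $r'$ being bad would not contradict minimality directly; so I need minimality phrased the other way: take $r_0$ to be a \emph{highest} bad node instead (no bad $r' \in R$ strictly above it). Then in case (a), $r'$ is a strict ancestor of $r_0$, so $r'$ is not bad, meaning there is a $\rho(N)$--$r'$ path $P$ avoiding $u$; extend $P$ by any directed path from $r'$ down to $p(r_0)$ and then the edge $(p(r_0), r_0)$. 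This yields a walk from $\rho(N)$ to $r_0$; it avoids $u$ because $u$ is not on $P$, $u \ne r_0$, and $u$ cannot be on the downward path from $r'$ to $p(r_0)$ (such a node would be both a descendant of $r'$ and an ancestor of $r_0$, hence a strict ancestor of $r_0$ below $r'$ — but more to the point, if $u$ were on that segment then $u$ would be an ancestor of $r_0$ and a descendant of $r'$; combined with $u$ being an ancestor of $r'$... — this needs the clean observation that $u$ cannot appear on any directed path that stays strictly below $u$, which holds trivially since nothing below $u$ equals $u$ in an acyclic digraph). Extracting a simple path from this walk finishes the case and contradicts $r_0$ bad, so no bad node exists.

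The main obstacle — and the only genuinely delicate point — is getting the minimality direction right (highest vs.\ lowest bad node) and then verifying carefully that the "splice" of the path to $r'$ (or to $p(r_0)$) with the descending segment genuinely avoids $u$; the acyclicity of $N$ is what makes all these avoidance claims go through, since it prevents $u$ from being simultaneously an ancestor of $r \in R$ and a descendant of anything below $u$. I would also make sure to reduce cleanly at the start from "$\ell$ below $r$" to "reaching $r$", so the downward extensions are all handled by one uniform remark about descendants.
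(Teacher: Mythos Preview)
Your argument is correct and is essentially the paper's proof recast as a maximal-counterexample argument: where the paper explicitly builds an ascending chain $r_1, r_2, \ldots, r_k$ in $R$ (each $r_{j+1}$ above $p(r_j)$) until case (b) is reached and then concatenates the resulting path segments, you instead take a highest ``bad'' $r_0$ and derive a contradiction from the same one-step climb. The only blemish is the false start with a \emph{lowest} bad node; once you switch to \emph{highest}, everything goes through for exactly the reasons you give, and the acyclicity checks you flag (that $u$ cannot appear on any segment lying strictly below $u$, and that $r' \neq r_0$) are precisely the ones the paper uses implicitly.
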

	\begin{proof}
		Let  $\ell \in {\cal L}(N)$. Assume $\ell$ is below some $r_1\in R$. Then,  there is a path $P(r_1, \ell)$ from $r_1$ to $\ell$ that avoids $u$.
		Since $R$ is finite and $N$ is acyclic, there exists a series  of reticulation nodes, $r_1, r_2, \cdots, r_k$ such that:
		
(i) 	each $r_j$ has a parent $p_j$ below $r_{j+1}$ for $j=1, 2, \cdots, k-1$,  and

(ii) the node $r_k$ has a parent $p_k$ such that there is a path $P(\rho(N), p_k)$ from $\rho(N)$ to $p_k$ that avoids $u$. 
	
		Since $p_j$ is below $r_{j+1}$, there exists a path 
		$P(r_{j+1}, p_{j})$ from $r_{j+1}$ to $p_{j}$ for each $j< k$.  Since $N$ is acyclic and $r_{j+1}$ is below $u$, the path $P(r_{j+1}, p_{j})$ avoids $u$. 
		Concatenating these paths, we obtain the following path
		$$P(\rho (N), p_{k})+(p_k, r_k) + P(r_k, p_{k-1})+(p_{k-1}, r_{k-1}) + \cdots +
		(p_1, r_1)+P(r_1, \ell)$$
		from $\rho$ to $\ell$ that avoids $u$. 
		\qed
	\end{proof}

\section{ Reticulation visible networks}
\label{sec:Stable}

Gambette et al. (2015) proved that there are at most $4(n-1)$ reticulation nodes in a reticulation visible network with $n$ labeled leaves. 
On the other hand, there are as many as  $3(n-1)$ reticulations in the reticulation visible network in Figure~\ref{Fig1}.
So, what is the tight upper bound on the number of reticulation nodes? Interestingly,  $3(n-1)$ is the tight upper bound, which was independently proved by Bordewich and Semple (2015) using the induction approach.  Here, we present an alternative proof to illustrate our approach. 

\begin{figure}[!b] 
	\begin{center}
\includegraphics*[width=0.8\textwidth]{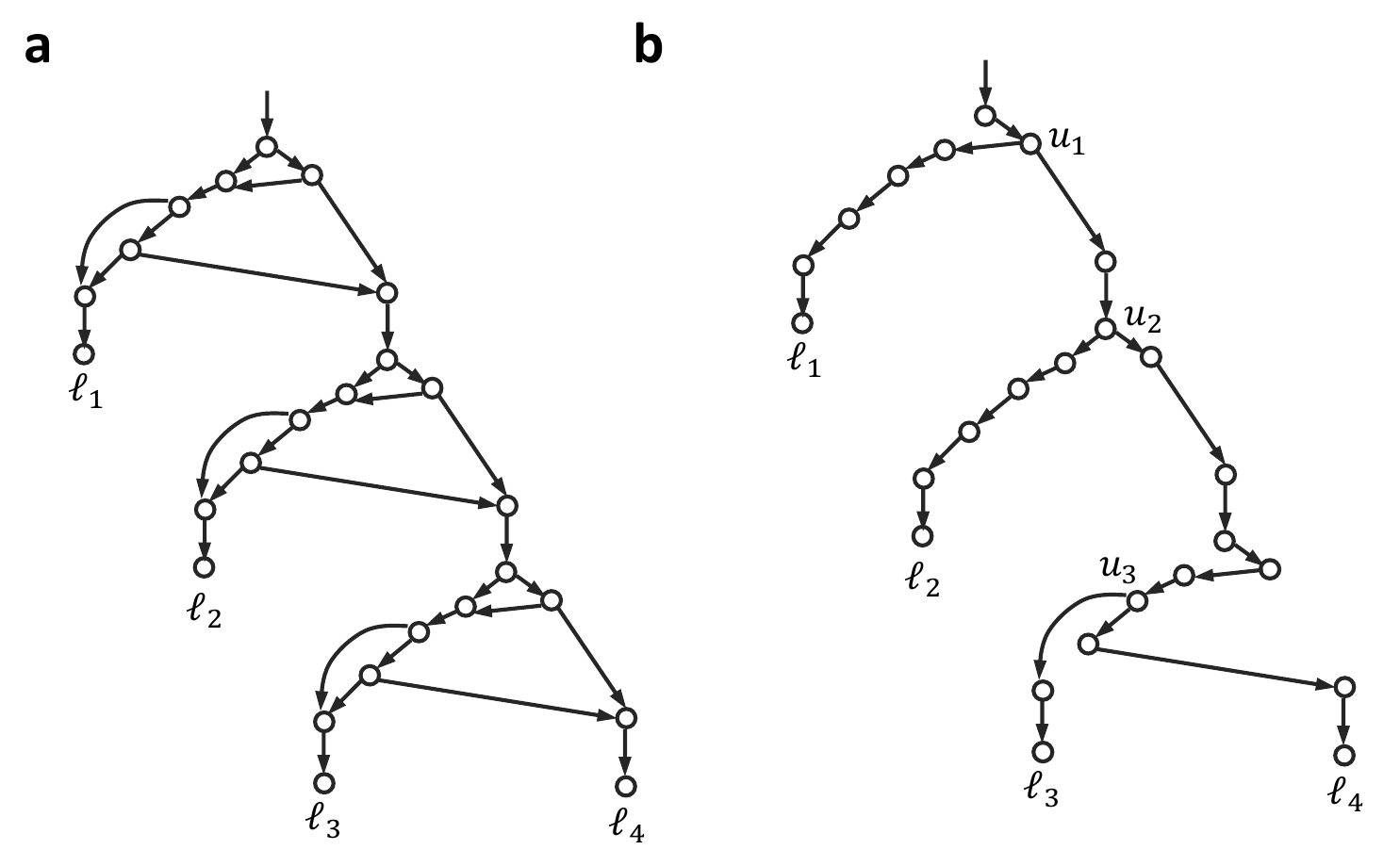}	
	\end{center}
	\caption{ ({\bf a}) A reticulation visible network with 4 leaves that has as many reticulation nodes as possible. ({\bf b}) A subtree of the network that has the same leaves, in which only $u_1, u_2, u_3$ are of degree 3 \label{Fig1}}
\end{figure}


Given a reticulation visible network $N$ with $n$ leaves, we 
let $E$ be a set of edges such that $N-E$ is a subtree with the same root and
leaves as $N$ (Theorem~\ref{stable_1}). 
Since $N-E$ has $n$ leaves, there are exactly $n-1$  nodes of degree 3. Thus,  there are $2n-2$  paths whose internal nodes are of degree 2, starting at a degree-3 node and terminating  at either another node of degree 3 or a leaf.  Let these $2n-2$ paths be  $P_1, P_2, \cdots, P_{2n-2}$.

The edges of $N-E$ not in $\cup_{1\leq i\leq 2n-2} \mathcal{E}(P_i)$ make up a path $P_0$ that contains the root $\rho (N)$ (Figure~\ref{Fig1}). If $\rho (N)$ is of degree 2, $P_0$ passes through $\rho(N)$ and terminates at a degree-3 node.
If $\rho(N)$ is of degree 3,  $P_0$  is simply the open edge entering $\rho (N)$.
Altogether, these $2n-1$ paths are called the {\it trivial paths} of $N-E$. Note that  $\mathcal{E}(N-E) = \uplus_{0\leq i\leq 2n-2} \mathcal{E}(P_i)$.

It is not hard to see that, for each edge in $E$,  its head and tail  are both found in these trivial paths.  An edge $(u,v) \in E$ is called a \emph{cross} edge if $u\in {\cal V}(P_i)$ and $v\in {\cal V}(P_j)$ for $i\neq j$; it is called   \emph{non-cross} edge otherwise.  The facts in the following proposition appear in the proof of Theorem 1 in  Gambette et al (2015).

\begin{proposition}
\label{31}
\begin{itemize}
 \item[{\rm (1)}] No two cross edges $e\in E$ have their heads in the same trivial path in $N-E$.  
 \item[{\rm (2)}] For each non-cross edge
 $(u, v)$ such that $u, v\in P_i$ for some $i\geq 0$, 
 there are at least one cross edge $(w,x)\in E$ such that $w$ is between $u$ and $v$ in $P_i$. 
\end{itemize} 
\end{proposition}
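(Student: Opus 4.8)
The plan is to prove both parts by contradiction: if the claimed cross-edge structure is absent, I will produce a reticulation node of $N$ that is not visible, contradicting the hypothesis that $N$ is reticulation visible. Two ingredients recur. The first is acyclicity: if $a$ is a proper descendant of $b$, then no directed path from $\rho(N)$ to an ancestor of $b$ meets $a$; in particular, for a node $b$ lying on a trivial path $P$, the initial segment of $N-E$ that runs from $\rho(N)$ down $P$ to $b$ avoids every node of $P$ strictly below $b$. The second is Lemma~\ref{lemma22}, which splices detours through a set of reticulation nodes. I also use the structural fact that a node internal to a trivial path has degree $3$ in $N$, hence is either a reticulation node (its one extra incident edge being an incoming edge in $E$) or a tree node (its one extra incident edge being an outgoing edge in $E$ whose head is a reticulation node).

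For statement (1), let $e_1=(x_1,r_1)$ and $e_2=(x_2,r_2)$ be cross edges whose heads $r_1,r_2$ lie on a common trivial path $P$; these heads are distinct because each reticulation node heads exactly one edge of $E$, so I may assume $r_1$ is an ancestor of $r_2$ along $P$. I claim $r_1$ is not visible. First, every leaf $\ell$ below $r_1$ lies below some reticulation node strictly below $r_1$: follow the directed path from $r_1$ towards $\ell$; while it is at a tree node of $P$ it either stays on $P$ or leaves $P$ along an edge of $E$ into a reticulation node, and since the reticulation node $r_2$ sits on $P$ strictly below $r_1$ the path cannot stay on $P$ through tree nodes all the way to $\ell$, so it meets a reticulation node strictly below $r_1$. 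Second, apply Lemma~\ref{lemma22} with $u=r_1$ and $R$ the set of reticulation nodes strictly below $r_1$: for each $r\in R$ take the parent lying below another node of $R$ if one exists, and otherwise argue --- using acyclicity together with the edge $e_1$, which enters $r_1$ from outside $P$, and the fact that $r_1$ is internal to $P$ --- that some parent of $r$ is reachable from $\rho(N)$ without meeting $r_1$. Lemma~\ref{lemma22} then supplies, for every leaf below a node of $R$ and hence (by the first step) for every leaf below $r_1$, a path from $\rho(N)$ avoiding $r_1$; since a node can be visible only with respect to leaves below it, $r_1$ is not visible.

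For statement (2), let $(u,v)$ be a non-cross edge with $u,v\in P_i$. By acyclicity $u$ precedes $v$ along $P_i$, and at least one node lies strictly between them (otherwise $(u,v)$ would lie in $N-E$). Assume no cross edge of $E$ has its tail strictly between $u$ and $v$ on $P_i$. Let $w$ be the node immediately preceding $v$ on $P_i$, so $w$ is strictly between $u$ and $v$. If $w$ were a reticulation node, its only child would be $v$, and since $\rho(N)\to u$ followed by the edge $(u,v)$ reaches $v$ while avoiding $w$ (as $w$ is strictly below $u$), every leaf below $w$ would be reachable from $\rho(N)$ without meeting $w$, so $w$ would not be visible --- impossible. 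Hence $w$ is a tree node, carrying an edge $(w,z)\in E$ with $z$ a reticulation node. If $z\notin P_i$ then $(w,z)$ is the desired cross edge; so $z\in P_i$, and acyclicity together with $z\neq v$ (otherwise $v$ would have indegree $3$) forces $z$ strictly below $v$ on $P_i$. If $z$ is the child of $v$, then $\rho(N)\to w\to z$ avoids $v$, making $v$ not visible --- impossible again. Otherwise $(w,z)$ is a non-cross edge whose lower endpoint lies strictly below the child of $v$, and repeating the argument at $z$ yields a strictly descending sequence of reticulation nodes on $P_i$; since $P_i$ is finite this must halt, and the halting configuration is then analysed to reach a contradiction with reticulation visibility.

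The main obstacle in part (1) is verifying the hypothesis of Lemma~\ref{lemma22} for all of $R$: the delicate case is a reticulation node whose two parents are both reachable from $\rho(N)$ only through $r_1$, and it is here that one must exploit the cross edge $e_1$ entering $r_1$ from outside $P$ and the presence of a child of $r_1$ on $P$. In part (2) the obstacle is to organize the descent along $P_i$ so that every branch of the case analysis either produces a cross edge with tail strictly between $u$ and $v$ or reaches a node forced to be non-visible; I expect statement (1) to be needed here, to bound the reticulation nodes of $P_i$ that head cross edges, together with an extremal choice of the non-cross edge $(u,v)$.
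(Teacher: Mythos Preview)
The paper does not supply its own proof of Proposition~\ref{31}; it simply records that both facts appear in the proof of Theorem~1 of Gambette et al.\ (2015). So there is nothing in-paper to compare against directly, and the real question is whether your plan would actually close. It is sound in spirit but over-engineered in both parts, and the ``obstacles'' you flag are artefacts of that over-engineering rather than genuine difficulties.

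\textbf{Part (1).} Taking $R$ to be \emph{all} reticulation nodes strictly below $r_1$ forces you to verify the hypothesis of Lemma~\ref{lemma22} for a large, uncontrolled set; that is exactly where your ``delicate case'' arises, and it is not obvious how you would discharge it. It is far simpler to use only $r_2$. Since $r_1$ and $r_2$ are both internal to the same trivial path $P$, with $r_2$ between $r_1$ and the terminal node of $P$, every leaf below $r_1$ in $T=N-E$ is also below $r_2$ in $T$. The tail $x_2$ of the cross edge $(x_2,r_2)$ lies on a different trivial path and cannot be below $r_1$ in $T$: otherwise the $T$-path from $r_1$ to $x_2$ would pass through $r_2$, and together with $(x_2,r_2)$ would close a directed cycle. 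Hence $P_T(\rho,x_2)+(x_2,r_2)+P_T(r_2,\ell)$ avoids $r_1$ for every leaf $\ell$ below $r_1$ in $T$, and $r_1$ is not visible. No large $R$ is needed; if you prefer to phrase it through Lemma~\ref{lemma22}, take $R=\{r_2\}$.

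\textbf{Part (2).} You correctly argue that the $P_i$-parent $w$ of $v$ is a tree node and that, under the contradiction hypothesis, its $E$-edge $(w,z)$ has $z\in P_i$ strictly below $v$. You then handle only the sub-case where $z$ is the child of $v$, and for $z$ further down you propose to ``repeat at $z$'' and leave the terminal analysis open. The recursion is unnecessary: the argument you already used works verbatim for any $z$ strictly below $v$ on $P_i$. Since $v$ is internal to $P_i$, every leaf below $v$ in $T$ is below $z$ as well, so $P_T(\rho,w)+(w,z)+P_T(z,\ell)$ avoids $v$, and the reticulation node $v$ is not visible. One step, not an infinite descent, finishes the contradiction. (A small correction: your reason for $z\neq v$ should be that $E$ contains exactly one incoming edge per reticulation and $(u,v)$ is already in $E$, not that $v$ would acquire indegree $3$.)
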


For a cross edge $(u, v)$ such that $u\in {\cal V}(P_i)$ and $v\in {\cal V}(P_j)$($i \neq j$), we say $(u, v)$ \emph{leaves} $P_i$ and \emph{enters} $P_{j}$. For a non-cross path $(u, v)$ and a cross edge $(w, x)$, if $u$ and $v$ are in $P_{i}$ and $w$ is a node between $u$ and $v$ in $P_i$, we say $(u, v)$ \emph{jumps over}  $(w, x)$. 

It is trivial to see that no cross edge enters the trivial path $P_0$. 
 Proposition~\ref{31} suggests that $E$ contains at most $2n-2$ cross edges and thus at most $2n-2$ non-cross edges. By Theorem~\ref{stable_1}, $|{\cal R}(N)|=|E|\leq 4n-4$.

To obtain the tight upper bound $3n-3$ for $|{\cal R}(N)|$, we define the cost $c(e)$ of a cross edge $e\in E$ as:
  \begin{equation*}
  \label{cost_def}
   c(e)=\begin{cases}2\;\;\mbox{if there is an non-cross edge jumping over the tail of $e$}, \\1\;\;\mbox{otherwise.}
        \end{cases}
  \end{equation*}
 We will charge the cost of a cross edge  to the trivial path  it enters and 
  call it the \emph{weight} of the trivial path.
 If no cross edge enters a trivial path, the  weight of this trivial path is set to be 0.  By Proposition~\ref{31}, the weight of a trivial path is at most 2.
 We use $w(P_i)$ to denote the weight of a trivial path $P_i$, $0\leq i\leq 2n-2$.

\begin{figure}[!b] 
	\begin{center}
		\includegraphics*[width=0.95\textwidth]{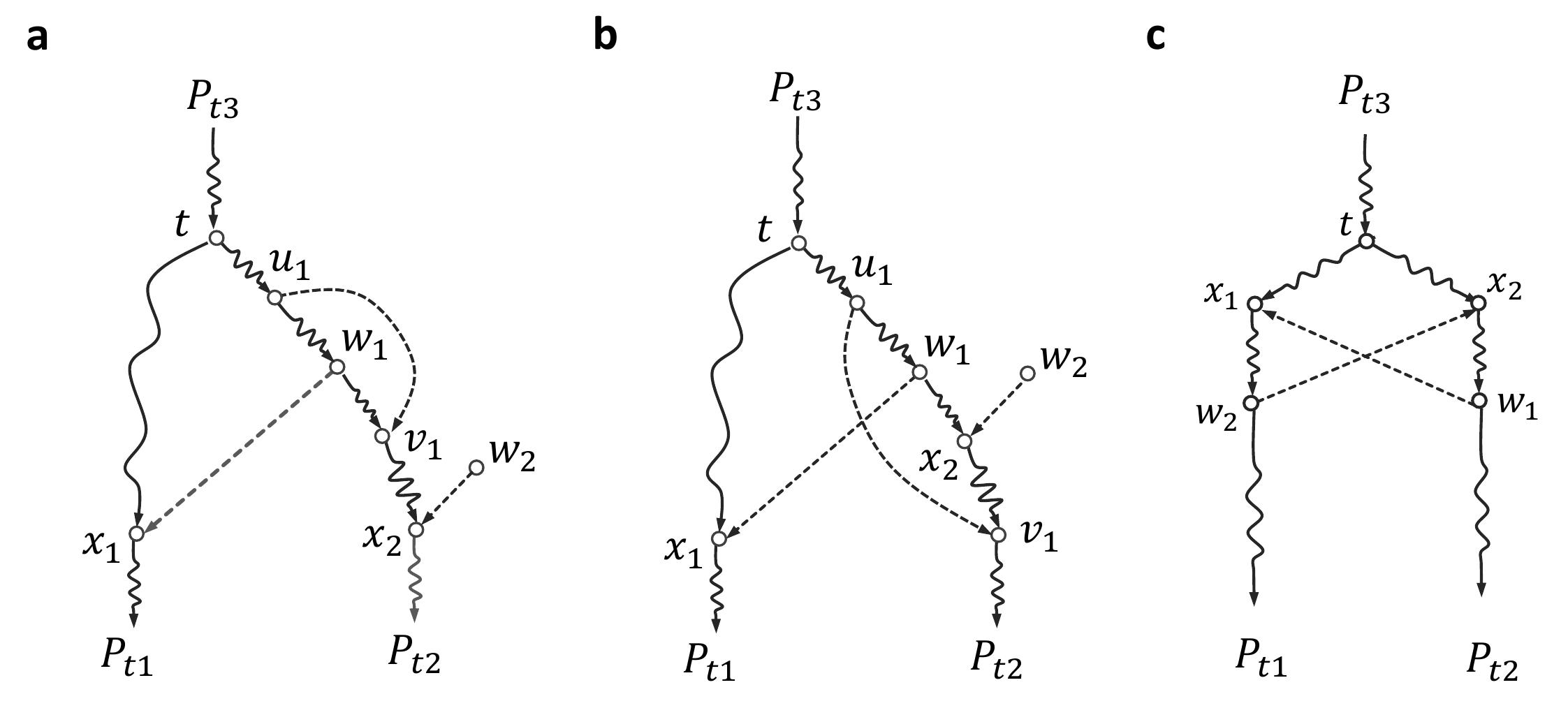}	
	\end{center}
	\caption{Three cases that are considered in the proof of Proposition~\ref{Stable_3}. $P_{t1}, P_{t2}, P_{t3}$ are the three trivial paths incident to a degree-3 node $t$; $(w_i, x_i)$ is the cross edge ending at $v_i$ in $P_{ti}$ and the non-cross edge $(u_i, v_i)$ jumps over $(w_i, x_i)$ for $i=1, 2$.  Here, $(u_2, v_2)$ is not drawn.  ({\bf a}) $w_1$ and $v_1$ are both between $t$ and $x_2$ in $P_{t2}$.
	({\bf b}) $w_1$ is between $t$ and $x_2$ in $P_{t2}$, but $v_1$ is below $x_2$ in $P_{t2}$. 
	({\bf c})  The node $w_1$ is below $x_2$ and  $w_2$ is below $x_1$. 
	This case is impossible to occur, as there is a directed cycle. 
	\label{Fig2}}
\end{figure}


For an internal  node $t$ of degree 3 in $N-E$, we use  $P_{t3}$ to denote the trivial path entering $t$ and
$P_{t1}, P_{t2}$ to denote the two trivial paths leaving $t$.

\begin{proposition} \label{Stable_3}
	Let $t$ be a degree-3 node in $N-E$.
	
{\rm (i)} 	If  $P_{t3}\neq P_0$ and  $w(P_{t1})=w(P_{t2})=2$, then $w(P_{t3})=0$.  

{\rm (ii)} If $P_{t3}=P_0$, then $P_{t3}=0$ and $w(P_{t1})+w(P_{t2})\leq 3$.
\end{proposition}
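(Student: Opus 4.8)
The plan is to dispose of the easy part of (ii) at once and then treat the two remaining assertions as a single analysis of the situation $w(P_{t1})=w(P_{t2})=2$. Since no cross edge enters $P_0$, we have $w(P_0)=0$ by the definition of weight; hence if $P_{t3}=P_0$ then $w(P_{t3})=0$, which is the first half of (ii). For everything else, assume $w(P_{t1})=w(P_{t2})=2$. By Proposition~\ref{31}(1) there is a \emph{unique} cross edge $e_i=(w_i,x_i)$ entering $P_{ti}$, with head $x_i\in\mathcal{V}(P_{ti})$, and since $c(e_i)=2$ there is a non-cross edge $f_i=(u_i,v_i)$ jumping over $w_i$, i.e.\ $u_i,v_i,w_i$ all lie on one trivial path $Q_i$ with $w_i$ strictly between $u_i$ and $v_i$; note $Q_i\neq P_{ti}$ and $w_i\notin\mathcal{V}(P_{ti})$ because $e_i$ is a cross edge leaving $Q_i$. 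I would then record three facts: (a) each $x_i$ is a reticulation node and a proper descendant of $t$, since $t$ is a tree node and $x_i$ is an internal node of $P_{ti}$; (b) acyclicity forbids $w_i$ from being a descendant of $x_i$; (c) every directed path from $\rho(N)$ to $x_i$ either passes through $t$ or has $e_i$ as its last edge, obtained by tracing such a path backwards from $x_i$ and using that, by Proposition~\ref{31}(1), the only cross edge into $P_{ti}$ enters at $x_i$, so the trace cannot leave $P_{ti}$ except via $e_i$ before reaching the head $t$ of $P_{ti}$.

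\paragraph{Locating the tails.}
The heart of the proof is a case analysis on the positions of $w_1$ and $w_2$. Since $w_i\notin\mathcal{V}(P_{ti})$, each $w_i$ lies on $P_{t3}$, on $P_{t(3-i)}$, or on some other trivial path; I take the principal case to be the one drawn in Figure~\ref{Fig2}, namely $w_1\in\mathcal{V}(P_{t2})$ and $w_2\in\mathcal{V}(P_{t1})$, with the other placements either ruled out by acyclicity or reduced to this one. In the principal case: if $w_1$ lay strictly below $x_2$ on $P_{t2}$ and simultaneously $w_2$ lay strictly below $x_1$ on $P_{t1}$, then the segment of $P_{t2}$ from $x_2$ to $w_1$, followed by $e_1$, followed by the segment of $P_{t1}$ from $x_1$ to $w_2$, followed by $e_2$, would be a directed cycle --- impossible (this is Figure~\ref{Fig2}(c)). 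As the hypothesis is symmetric in $P_{t1}$ and $P_{t2}$, we may thus assume $w_1$ lies strictly between $t$ and $x_2$ on $P_{t2}$ (one checks $w_1\neq t,x_2$ from the trivial-path decomposition), and then fact (c) upgrades to: every directed path from $\rho(N)$ to $x_1$ passes through $t$.

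\paragraph{Finishing each part.}
For (i), suppose for contradiction that a cross edge $e_3=(w_3,x_3)$ enters $P_{t3}$; then $x_3$ is a reticulation, and since $P_{t3}\neq P_0$ the nodes of $P_{t3}$ are $t$ together with proper ancestors of $t$, so $x_3$ is a proper ancestor of $t$ and $x_3\rightsquigarrow t$. Fact (c) now forbids $w_3$ from being a descendant of $t$ (else $x_3\rightsquigarrow t\rightsquigarrow w_3\to x_3$ is a cycle), and running through the admissible placements of $w_3$ relative to the configuration around $t$ forces a directed cycle through $e_3$ and one or both of $e_1,e_2$ (again using the $t$-to-$x_2$ segment of $P_{t2}$); hence no such $e_3$ exists and $w(P_{t3})=0$. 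For (ii), where $P_{t3}=P_0$, every node of $N$ is either on $P_0$ and an ancestor of $t$, or a proper descendant of $t$; in particular every leaf lies below $t$, and the only incoming edge of $t$ is the last edge of $P_0$. Using this together with the configuration --- above all, that every path from $\rho(N)$ to $x_1$ passes through $t$ --- I would apply Lemma~\ref{lemma22} (with $u$ a suitably chosen reticulation below $t$, which I expect to be $x_1$, and $R$ the relevant set of reticulations below it) to produce, for the leaf $\ell$ witnessing that reticulation's visibility, a directed path from $\rho(N)$ to $\ell$ avoiding it --- contradicting reticulation-visibility. Hence $w(P_{t1})=w(P_{t2})=2$ is impossible when $P_{t3}=P_0$, and since each weight is at most $2$, $w(P_{t1})+w(P_{t2})\le 3$.

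\paragraph{Main obstacle.}
The hard part is the case analysis of the middle two paragraphs: one must enumerate where the tails $w_1,w_2$ (and, in (i), $w_3$) sit relative to $t$ and to one another, and in each configuration produce either the right directed cycle or the right rerouted path to a leaf. Each individual case should be a short argument from acyclicity or from fact (c) and Lemma~\ref{lemma22}, but keeping the enumeration exhaustive and the bookkeeping straight is delicate; the configurations in Figure~\ref{Fig2} are the representative difficult ones, and I would organize the write-up around them.
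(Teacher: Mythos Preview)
Your setup and the first half of (ii) are fine, but the plan for (i) has a genuine gap. The claim that, once a cross edge $e_3=(w_3,x_3)$ into $P_{t3}$ is assumed, ``running through the admissible placements of $w_3$ \ldots\ forces a directed cycle through $e_3$ and one or both of $e_1,e_2$'' is not correct. You have already (correctly) noted that $w_3$ cannot lie below $t$, but nothing prevents $w_3$ from sitting in a trivial path disjoint from $P_{t1},P_{t2},P_{t3}$; no cycle is then forced. In fact the conclusion of (i) is \emph{not} a consequence of acyclicity alone: one can draw acyclic binary configurations with $w_1$ between $t$ and $x_2$, with $w_2$ below $x_1$, and with an additional cross edge into $P_{t3}$ --- what fails in such pictures is reticulation-visibility, not acyclicity. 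A related symptom is that after introducing the non-cross edges $f_i=(u_i,v_i)$ you never use them again; yet they are the entire content of the hypothesis $w(P_{ti})=2$ rather than $1$, and they are precisely where visibility enters.

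The paper's argument runs in the opposite direction to your ``principal case''. It first uses the non-cross edge $(u_1,v_1)$ to show that $w_1$ \emph{cannot} lie between $t$ and $x_2$: if it did, then depending on whether $v_1$ lands above or below $x_2$ on $P_{t2}$, Lemma~\ref{lemma22} shows that the reticulation $v_1$ (respectively $x_2$) is not visible. Symmetrically $w_2\notin P_T(t,x_1)$. With this exclusion in hand, for (i) one lets $r$ be the \emph{lowest} reticulation on $P_{t3}$, uses the non-cross edges once more to see that $w_1,w_2\notin P_T(r,t)$ (else $v_j$ would be a reticulation on $P_{t3}$ strictly below $r$), and then applies Lemma~\ref{lemma22} with $u=r$ and $R=\{x_1,x_2\}$ to contradict the visibility of $r$. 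Part (ii) then follows from the same exclusion together with one honest cycle argument (your Figure~\ref{Fig2}(c) case). So your fact~(c) and the cycle observation are correct ingredients, but the missing idea is the visibility-based exclusion of exactly the configuration you took as principal; your proposed use of Lemma~\ref{lemma22} in (ii) with $u=x_1$ does not recover this, since there is no natural set $R$ of reticulations below $x_1$ to feed into the lemma.
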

\begin{proof}
For sake of simplicity, we let $T=N-E$ and use $P_{T}(z', z'')$ to denote the unique path from $z'$ to $z''$ for a node $z'$ and a descendant $z''$ of  $z'$ in $T$. $N$ and $T$ have the same root and leaves. The common root of $N$ and $T$ is written $\rho$. 

(i) Assume that $P_{t3}\neq P_0$ and  $w(P_{t1})=w(P_{t2})=2$. Then, there exists a cross edge $(w_j,x_j)$ entering $P_{tj}$ and a  non-cross edge $(u_j,v_j)$ jumping over $(w_j,x_j)$ for each  $j =1,2$. 
 We  shall prove that
$w(P_{t3})=0$ by showing that no cross edge enters $P_{t3}$.

Assume $w_1$ is between $t$ and $x_2$ in $P_{t2}$. 
When $v_1$ is below $w_1$ in $P_{T}(t, x_2)$ (Figure~\ref{Fig2}a), there are two cases.
If 
$w_2$ is in the path $P_{t1}$ or below it, then, $P_{T}(\rho, w_2)$ does not pass $v_1$. 
If $w_2$ is not below $t$, $P_{T}(\rho, w_2)$ does not pass $t$ and so $v_1$. 
Therefore, by Lemma~\ref{lemma22}, there is path from $\rho$ to every leaf below $x_2$ that does not pass $v_1$. For any leaf $\ell$ not below $x_2$ in $T$, 
$P_{T}(\rho, \ell)$ avoids $v_1$.  Hence, $v_1$ is a reticulation node in $N$, but not visible. This is a contradiction.  

When $v_1$ is below $x_2$ in $T$ (Figure~\ref{Fig2}b),  $v_1$ is below $x_2$ as a reticulation node. Since $P_{T}(\rho, u_1)$ does not pass $x_2$, by Lemma~\ref{lemma22}, there is a path from $\rho$ to a leaf below $x_2$ that does not pass $x_2$. For any leaf $\ell$ not below $x_2$ in $T$, $P_{T}(\rho, \ell)$ avoids $x_2$. Hence, $x_2$ is not visible, a contradiction. 

We have proved that $w_1$ is not between $t$ and $x_2$ in the tree path $P_{t2}$.
By symmetry, $w_2$ is not between  $t$ and $x_1$ in  $P_{t1}$. 

Assume there is a cross edge enters $P_{t3}$.  Let $r$ be the lowest reticulation node in $P_{t3}$. Then, $w_1$ and $w_2$ are both not in $P_{T}(r, t)$. Otherwise, either $v_1$ or $v_2$ is between $r$ and $t$, contradicting that $r$ is the lowest reticulation node in $P_{t3}$. 
Combining this fact with that $w_j$ is not between $t$ and $x_{3-j}$ in $T$ for $j=1, 2$, we conclude that either $w_j$ is below $x_{3-j}$ or there is a path from $\rho$ to $w_j$ not passing $r$ for each $j=1, 2$. Hence, by 
Lemma~\ref{lemma22}, $r$ is not visible with respect to any leaf below $r$.
For any leaf $\ell$  not below $r$ in $T$, the tree path $P_{T}(\rho, \ell)$
avoids $r$. Hence, $r$ is not visible, a contradiction. 

 We have proved that $w(P_{t3})=0$. 

(ii) If $P_{t3}=P_0$, then $t$ is an ancestor of any other degree-3 node in $N-E$. 
 Since $N$ is acyclic, there does not exist $(u, v)\in E$ such that $u\in P_i$ for some $i>0$ and $v\in P_0$. Hence, $w(P_0)=0$. 

 Assume on the contrary 
 the weights of $P_{t1}$ and $P_{t2}$ are both 2. Then, $w_j$ is not between $t$ and $x_{3-j}$ for $j=1, 2$, proved above.  If $w_1$ or $w_2$ is in $P_0$, the lowest reticulation in $P_0$ is not visible, a contradiction. Otherwise,  $w_1$ is below $x_2$ and $w_2$ is below $x_1$, implying a cycle in $N$ (Figure~\ref{Fig2}c). This is a contradiction. Hence, either $P_{t1}$ or $P_{t2}$ has weight less than 2.
$\Box$
\end{proof}

\begin{theorem} Let $N$ be a reticulation visible network with $n$ leaves. Then,
	$$|\mathcal{R}(N)| \leq 3(n-1). $$
\end{theorem}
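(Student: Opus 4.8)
The plan is to combine the structure already in place --- the trivial-path decomposition of $T:=N-E$, Proposition~\ref{31}, and Proposition~\ref{Stable_3} --- into one global counting argument. Fix $E$ as above, so that $|\mathcal{R}(N)|=|E|$ equals the number of cross edges plus the number of non-cross edges. I would reduce the theorem to two claims: (a) $|\mathcal{R}(N)|\le\sum_{i=0}^{2n-2}w(P_i)$, and (b) $\sum_{i=0}^{2n-2}w(P_i)\le 3(n-1)$. For (a), observe that $\sum_i w(P_i)=\sum_{e\ \mathrm{cross}}c(e)$ equals the number of cross edges plus the number of cross edges of cost $2$; since by Proposition~\ref{31}(2) every non-cross edge jumps over at least one cross edge, which then has cost $2$, it suffices to assign distinct non-cross edges to distinct cross edges that they jump over. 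Because a non-cross edge inside $P_i$ can only jump over cross edges with tail in $P_i$, this can be checked one trivial path at a time, using Proposition~\ref{31}(2) together with the binary degree constraints (each internal node of $P_i$ is the tail of at most one cross edge and of at most one non-cross edge); a short matching argument then yields $|\mathcal{R}(N)|=|E|\le\sum_i w(P_i)$.

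For (b) I would organize the $2n-1$ trivial paths as the edge set of a rooted binary tree $\mathcal{B}$ whose vertices are the degree-$3$ nodes of $T$ together with the leaves of $T$ and the external endpoint of the open root edge: $P_0$ is the edge above the root, and every internal vertex $t$ has incoming edge $P_{t3}$ and two outgoing edges $P_{t1},P_{t2}$, exactly in the notation of Proposition~\ref{Stable_3}. Since $w(P_0)=0$ and every trivial path other than $P_0$ is the child edge of exactly one internal vertex,
$$\sum_{i=0}^{2n-2}w(P_i)=\sum_{t\ \mathrm{internal}}\bigl(w(P_{t1})+w(P_{t2})\bigr),$$
a sum of $n-1$ terms, each at most $4$ because each $w(P_{ti})\le 2$. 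Setting $s(t)=3-\bigl(w(P_{t1})+w(P_{t2})\bigr)$, claim (b) becomes $\sum_t s(t)\ge 0$.

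Call an internal vertex $t$ \emph{deficient} if $s(t)=-1$, i.e.\ $w(P_{t1})=w(P_{t2})=2$. By Proposition~\ref{Stable_3}(ii) a deficient vertex cannot have $P_{t3}=P_0$, so it has a genuine parent vertex $t'$ in $\mathcal{B}$; by Proposition~\ref{Stable_3}(i), $w(P_{t3})=0$, and since $P_{t3}$ is one of the two child edges of $t'$, the vertex $t'$ is not deficient and satisfies $s(t')\ge 1$; moreover if \emph{both} children of $t'$ are deficient then both child edges of $t'$ have weight $0$, so $s(t')=3$. Mapping each deficient vertex to its parent, every image vertex is non-deficient, receives at most two deficient vertices, and in either case has $s(t')\ge\#\{\text{deficient children of }t'\}$. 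Hence
$$\sum_t s(t)\ \ge\ \sum_{t\ \mathrm{deficient}}(-1)\ +\ \sum_{t'\in\mathrm{Image}}s(t')\ \ge\ -\,\#\{\text{deficient vertices}\}+\#\{\text{deficient vertices}\}=0,$$
using that all non-deficient vertices outside the image contribute $s(t)\ge 0$. Combining with (a), $|\mathcal{R}(N)|\le\sum_i w(P_i)=3(n-1)-\sum_t s(t)\le 3(n-1)$.

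I expect the main obstacle to be the bookkeeping in (b): setting up $\mathcal{B}$ correctly --- in particular handling the corner case where $\rho(N)$ has degree $2$, so that $P_0$ genuinely passes through the root --- and guaranteeing that the slack of a single non-deficient vertex is never asked to absorb more deficiency than it carries, which is precisely what the ``two deficient children force $s(t')=3$'' observation ensures. The matching in (a) is more routine, but it still has to be written carefully, since that is exactly the point where one needs distinct non-cross edges charged to \emph{distinct} cross edges rather than merely to \emph{some} cross edge.
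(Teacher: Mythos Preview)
Your argument is correct and follows essentially the same approach as the paper: your slack function $s(t)=3-(w(P_{t1})+w(P_{t2}))$ and the parent-map from deficient vertices repackage exactly the paper's partition $V=\biguplus_{k=0}^{4} V_k$ together with its key inequality $|V_4|\le 2|V_0|+|V_1|+|V_2|$, and both rest on Proposition~\ref{Stable_3} in the same way. The one place where you are actually more careful than the paper is claim~(a): the paper writes $|\mathcal{R}(N)|=\sum_{v}\bigl(w(P_{v1})+w(P_{v2})\bigr)$ as an equality without argument, whereas you correctly isolate the needed inequality $\le$ and the injective matching of non-cross edges to cross edges that justifies it (which indeed goes through once one checks, via reticulation visibility, that any two non-cross edges lying in the same $P_i$ have disjoint spans, so they cannot jump over the same cross tail).
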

\begin{proof}
Let $V$ denote the set of $(n-1)$ internal nodes of degree 3 in $N-E$. Note that 
any trivial path other than $P_0$ starts with a node in $V$.   Define:
$$V_{k}=\{ v\in V\;|\; w(P_{v1})+w(P_{v2})=k\}$$
for $0\leq k\leq 4$.
Clearly, $V_k$'s are pairwise disjoint and hence 
$$|V_0|+|V_1|+|V_3| +|V_3|+|V_4|=|V|=n-1.$$

When $v\in V_4$, 
$w(P_{v1})=w(P_{v2})=2$. By Proposition~\ref{Stable_3}, $P_{v3}\neq P_0$.
Let  $p(v)$ be the start node of $P_{v3}$ for each $v\in V_4$.
Again, by Proposition~\ref{Stable_3}, $p(v)\in V_{0}
\cup V_{1}\cup V_{2}$. It is 
clear that under the map $p(\cdot)$, at most two nodes in $V_4$ are mapped to the same node in $V_0$, and different nodes in $V_4$ are mapped to different nodes in $V_1\cup V_2$. Thus, 
$|V_4|\leq 2|V_0|+ |V_1|+|V_2|$. Since $w(P_0)=0$, the inequality implies that
	$$\begin{array}{rll}
	|\mathcal{R}(N)|
	&= &\sum _{v\in V}[w(P_{v1})+w(P_{v2})]\\
&	=& |V_1|+2|V_2|+3|V_3|+ (3|V_4|+|V_4|) \\
&	\leq& 2(|V_0|+|V_1|) + 3(|V_2|+|V_3|+|V_4|)\\
&\leq & 3(|V_0|+|V_1|+|V_2|+|V_3|+|V_4|)\\
&=& 3(n-1),
\end{array}$$
where the first inequality is derived from the substitution of $2|V_0|+ |V_1|+|V_2|$ for $|V_4|$.
	\qed
\end{proof}

	\section{ Galled networks}
	\label{sec:Galled}
	
Galled networks form a subclass of reticulation visible networks (Huson et al. 2011). 	In this section,  we shall show that there are at most $2(n-1)$ reticulations  in a galled network with $n$ leaves.  Given that the galled network shown in Figure~\ref{Fig3}a has exactly  $2(n-1)$ reticulations,  $2(n-1)$ is the tight bound on  the number of reticulation nodes in a galled network with $n$ leaves.
	
	
	\begin{figure}[!b] 
		\begin{center}
			\includegraphics*[width=0.8\textwidth]{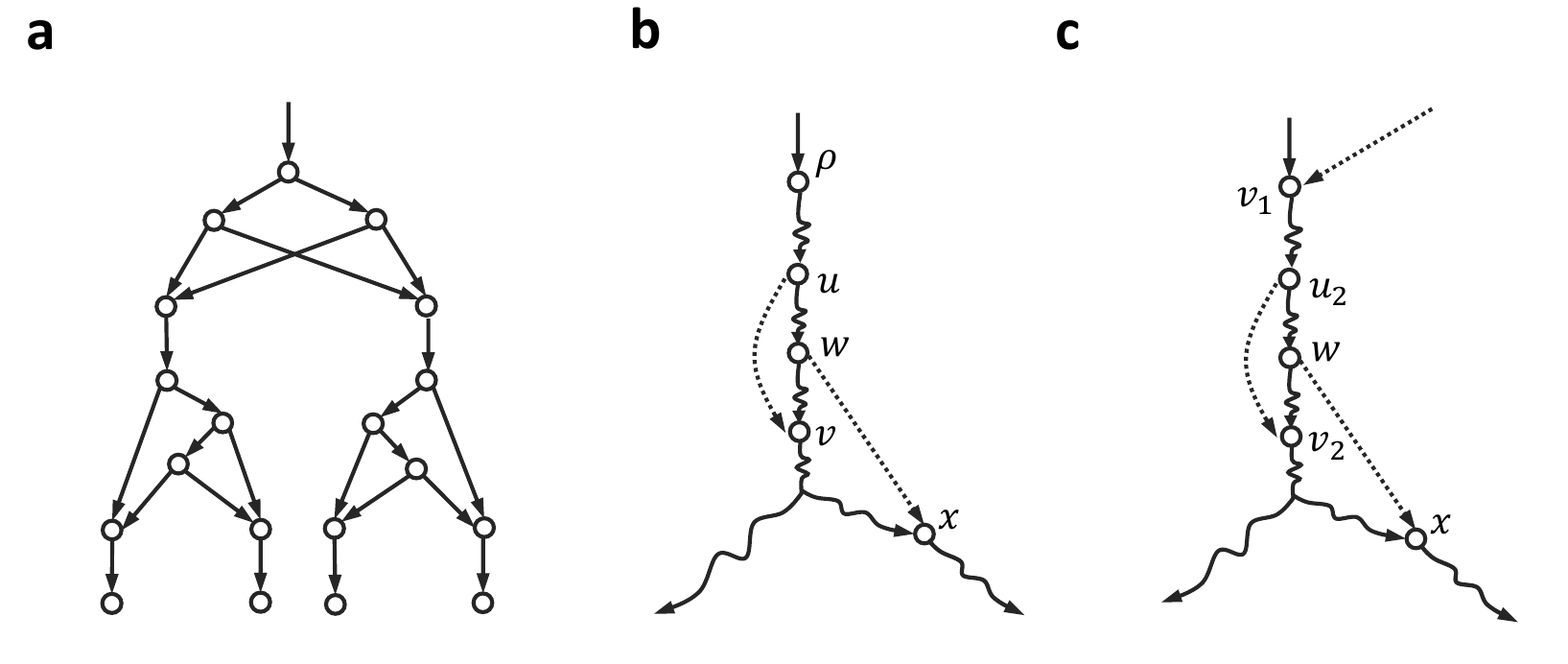}	
		\end{center}
		\caption{ ({\bf a}) A galled network with 4 leaves that has as many reticulation nodes as possible. ({\bf b}) and ({\bf c}) are two cases considered in the proof of  Theorem~\ref{GN_1}: 
there is a non-cross edge $(u, v)$ in $E$ such that $u$ and $v$ are in $P_0$, and  there is a cross-edge edge $(u_1, v_1)$ and a non-cross edge
$(u_2, v_2)$ both ending at a node 
 in a  trivial path other than $P_0$, where $u_1$ is not  drawn. In ({\bf b}) and ({\bf c}),    
solid straight and curve arrows   represent edges and paths in $N-E$, respectively;  round dot arrows represent edges in $E$, respectively \label{Fig3}}
	\end{figure}

	\begin{theorem} \label{GN_1}
		For a galled network $N$ with $n$  leaves,
		$|\mathcal{R}(N)| \leq 2(n-1).$
	\end{theorem}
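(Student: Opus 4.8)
The plan is to reuse the cross-edge/non-cross-edge accounting set up for reticulation visible networks, but to exploit the stronger structural constraint that galledness imposes: for every reticulation node $r$ there is an ancestor $a$ with two internally disjoint \emph{tree} paths to $r$. As before, fix $E$ with $N-E$ a subtree on the same leaves, write $T=N-E$, and decompose $T$ into the $2n-1$ trivial paths $P_0,\dots,P_{2n-2}$. By Theorem~\ref{stable_1} and Proposition~\ref{31}, $|\mathcal{R}(N)|=|E|$ and $E$ has at most $2n-2$ cross edges, each entering a distinct trivial path $\neq P_0$. So it suffices to show $|E|\le 2n-2$, i.e. that the number of non-cross edges is bounded by (number of trivial paths with no cross edge entering them) $+$ something that cancels — more precisely, I would aim to show that \emph{no non-cross edge exists at all}, or that each non-cross edge can be injectively charged to a cross-edge-free trivial path.

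First I would rule out non-cross edges whose endpoints both lie on $P_0$: if $(u,v)\in E$ with $u,v\in P_0$ and $v$ the head (a reticulation), then by Proposition~\ref{31}(2) there is a cross edge leaving $P_0$ between $u$ and $v$ — but no cross edge leaves a trivial path to re-enter $P_0$, and more to the point I'd argue (as in Figure~\ref{Fig3}b) that the galled structure forces a second tree path from the common ancestor $a$ down to $v$ that must run along $P_0$, and then acyclicity or the disjointness of the two tree paths is violated. Next, the heart of the argument: consider a non-cross edge $(u,v)$ with $u,v$ on some trivial path $P_i\neq P_0$, so $v$ is a reticulation below $u$ on $P_i$. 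Galledness gives an ancestor $a$ of $v$ with two internally-disjoint tree paths $Q_1,Q_2$ from $a$ to $v$. One of these, say $Q_1$, must end with the edge $(u,v)$ (since $(u,v)$ is one of $v$'s two incoming edges and is a non-tree... wait — $(u,v)\in E$ so it's the deleted incoming edge; the tree path through $T$ to $v$ along $P_i$ is the other one). The tree path $Q_2$ arriving along $P_i$ comes from $u$'s side, while $Q_1$ goes $a\rightsquigarrow u\to v$ along... — I would show the two tree paths cannot both stay internal to the trivial-path structure without sharing a node, and this pins down where the cross edge "jumping" situation can occur, letting me inject non-cross edges into trivial paths carrying no cross edge.

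The cleanest route, mirroring the weight bookkeeping of the previous section, is: define for each cross edge cost $1$ (no "cost $2$" is needed here because galledness will forbid a non-cross edge from jumping over a cross edge in the relevant configuration), so each trivial path $\neq P_0$ has weight $\le 1$; then show that every non-cross edge forces, via the two-disjoint-tree-paths property and an argument like Proposition~\ref{Stable_3}, the existence of a trivial path of weight $0$ that can be assigned to it injectively. Combining, $|E|=(\#\text{cross})+(\#\text{non-cross})\le (\#\text{weight-1 paths})+(\#\text{weight-0 paths})\le 2n-2$, since there are $2n-2$ trivial paths other than $P_0$ and $P_0$ receives nothing. Summing over all $v\in V$ (the $n-1$ degree-3 nodes) as in the previous theorem then yields $|\mathcal{R}(N)|\le 2(n-1)$.

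The main obstacle I anticipate is the case analysis around a non-cross edge $(u_2,v_2)$ and a cross edge $(u_1,v_1)$ that share the head trivial path (Figure~\ref{Fig3}c): I must use the internally-disjoint \emph{tree}-path hypothesis — not just the weaker "two disjoint paths" — to derive a contradiction or a forced structure, and getting the galled ancestor $a$ correctly positioned relative to $t$, $u_1$, $v_1$, $u_2$, $v_2$ (and checking that the two tree paths from $a$ really do land on $v_2$ along $P_i$ in a way that creates a cycle or a disjointness failure) is where the delicate reasoning lies. I'd expect to lean heavily on Lemma~\ref{lemma22} again to convert "a reticulation can be reached from $\rho$ avoiding node $z$" into "$z$ is not visible," contradicting reticulation visibility of galled networks.
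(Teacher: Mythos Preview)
Your framework is right --- fix $E$, pass to $T=N-E$, decompose into trivial paths, and aim for $|E|\le 2n-2$ --- and you correctly identify the configuration of Figure~\ref{Fig3}c as the crux. But the argument stalls precisely there, and the fallback bookkeeping scheme (charge each non-cross edge to a weight-$0$ trivial path) is asserted, not proved: you never explain what the injection is or why it exists. The place where your reasoning tangles is that you apply the galled hypothesis to the \emph{head $v$ of the non-cross edge}; this is the wrong reticulation to look at.

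The paper's proof closes the gap with a single clean move: it shows directly that \emph{at most one edge of $E$ enters each trivial path $P_i$, $i>0$}. Suppose $(u_1,v_1),(u_2,v_2)\in E$ both have heads in $P_i$ with $v_2$ below $v_1$; visibility of $v_1$ forces $(u_2,v_2)$ to be non-cross with $u_2$ between $v_1$ and $v_2$. Proposition~\ref{31}(2) then produces a cross edge $(w,x)$ with $w$ between $u_2$ and $v_2$. Now apply galledness to $x$ (not to $v_2$): the two internally disjoint tree paths $P',P''$ from a common ancestor to $x$ contain no reticulation except $x$, so they contain no $E$-edge except possibly $(w,x)$; hence $P'+P''-(w,x)$ lies in the tree $T$. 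This forces one of $P',P''$ to equal the single edge $(w,x)$ and the other to be the unique $T$-path from $w$ to $x$. But the $T$-path from $w$ to $x$ must pass through $v_2$ (going down $P_i$ and out), and if $P'$ is longer than $(w,x)$ it must climb $P_i$ through $v_1$ --- either way an internal reticulation appears on a tree path, contradiction. The $P_0$ case (Figure~\ref{Fig3}b) is handled by the same trick applied to the cross-edge head there. No weights, no charging: once each $P_i$ receives at most one $E$-edge, $|E|\le 2n-2$ is immediate.
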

	\begin{proof}
	Let $N$ be a galled network with $n$ leaves and let $\rho=\rho(N)$.
Since $N$ is  reticulation visible, by Theorem~\ref{stable_1}, there is a set of edges $E$ such that (a) $E$ contains exactly one incoming edge for each reticulation node and (b) $N-E$ is a subtree with the same leaves as $N$. 

 We use the same notation as in Section 3. $P_0$ denotes the  trivial path whose first edge is the open edge entering $\rho$; 
$P_1, ...., P_{2n-2}$ denote the other $2n-2$ trivial paths in $N-E$. We prove the result by showing that $E$ does not contain any non-cross edges and 
only one cross-edge can end at a node in each $P_i$ for $i>0$. 

  If $P_0$ contains only  the open edge entering $\rho$, there is no edge in $E$ that enters $P_0$. We first prove that this fact is also true even if $P_0$ contains other edges below $\rho$.   
	
  Since $N$ is acyclic and there is a directed path from the end of $P_0$ to a node in $P_i$ for any $i>0$, there is no cross edge $(u,v) \in E$ such that  $u$ is in $P_i$ and $v$ is in $P_0$. 
 
 If there is a non-cross edge $(u,v)$ such that $u,v$ are in $P_0$ (Figure~\ref{Fig3}b), we let $w$ be the other child of $u$ in $P_0$.  Then, $w$ must be a tree node such that $(w,x)\in E$, where   $x$ is a reticulation node in some trivial path $P_i$, $i>0$. (If $w$ is a reticulation, it is not visible, a contradiction.)
Since $N$ is galled and  $x$ is a reticulation node, there exist
two  paths $P'$
and $P''$ from a common tree node to $x$ in $N$ such that (i) they are internally disjoint and (2) 
$x$ is the only reticulation  node in them. 
Note that no edges in $E$ other than $(w, x)$ can appear in $P'$ and $P''$. Otherwise, either $P'$ and $P''$ contains another reticulation node. 
Thus, $P'+ P'' - (w, x)$ is a subtree of $N-E$. This implies that 
 one of $P'$ and $P''$ is the single edge $(w, x)$ and the other is 
$P_{N-E}(w, x)$, the unique path from $w$ to $x$  in the tree $N-E$. This is impossible, as the reticulation node $v$ is in $P_{N-E}(w, x)$.

We have shown that there is no edge in $E$ that enters $P_0$.
Next, we show that there is at most one edge in $E$ that enters $P_i$ for each $1\leq i\leq 2n-2$.

Assume  that $(u_1,v_1)$ and $ (u_2,v_2)$ are two edges in $E$ such that $v_2$ is below $v_1$ in some $P_i$ ($i>0$) (Figure~\ref{Fig3}c). Then, $(u_2,v_2)$ must be a non-cross edge and $u_2$ is also below $v_1$. (Otherwise,   $v_1$ is not visible.)  Again, by Fact (2) in  Proposition~\ref{31}, there is a cross edge $(w,x)$ such that $w$ is between $u_2$ and $v_2$ in $P_i$ and $x$ is in $P_j$, $j\neq i$. 
	Since $x$ is a reticulation node and $N$ is galled, 
there are two internally disjoint paths $P'$ and $P''$ from a common tree node to $x$ in which  any nodes other than $x$ are a tree node. 
If $P' + P''$ contains an  edge in $E$ other than $(w, x)$,  the head of the edge is a reticulation node  and appears in either $P'$ or $P''$, a  contradiction. Hence,  $P'+ P''-(w, x)$ is a subtree of $T$.  Without loss of generality, we may assume $P'$ contains $(w, x)$. That is, $(w, x)$ is the last edge of $P'$. 
		Note that $v_1, u_2, w, v_2$ are all nodes in $P_i$, ordered from top to bottom. 
If $P'$ contains more than one edge in $T$, 
it must pass through $v_1$, a contradiction. If $P'$ is equal to $(w, x)$, then $P''$ must pass through $v_2$, a contradiction. Therefore, there is at most one edge in $E$ whose head is in each trivial path $P_i$, $i>0$. 

In summary, there are $2n-2$ trivial paths other than $P_0$ and there is at most one edge  in $E$ entering each of them. Hence, 
		$|\mathcal{R}(N)| = |E| \leq 2(n-1).$
		\qed
	\end{proof}

\section{ Nearly-stable network}
	\label{sec:NearlyStable}

	In this section we will give a tight bound for the number of reticulations in a nearly-stable network. The class of nearly-stable networks is different from the class of reticulation visible networks, but surprisingly the tight upper bound is also $3(n-1)$. The network shown in Figure~\ref{Fig4}a is an example for a nearly-stable network with $3(n-1)$ reticulations. We need the following fact, proved by Gambette et al. (2015).

	\begin{proposition} \label{NS_1}
		Let $N$ be a nearly-stable network with $n$ leaves. There exists a set $E$ of edges such that {\rm (a)} $N-E$ is a reticulation visible  subnetwork over the same leaves as $N$, and {\rm (b)} $E$ contains exactly one incoming edge for each  reticulation node that is not visible in $N$.
	\end{proposition}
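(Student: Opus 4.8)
The plan is to build $E$ by deleting exactly one incoming edge from each reticulation of $N$ that is not visible in $N$, choosing the deletions so that every leaf of $N$ stays reachable from the root, and then to let Lemma~\ref{Prop1} do the rest. Write $\mathcal{R}^{-}$ for the set of reticulations of $N$ that are not visible in $N$. First I would record the one place near-stability enters: if $r\in\mathcal{R}^{-}$ and $p$ is a parent of $r$, the edge $(p,r)$ forces $p$ or $r$ to be visible, and $r$ is not, so $p$ is visible. For the construction itself, however, all I use is that, $N$ being binary, each $r\in\mathcal{R}^{-}$ has exactly two incoming edges.

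Next I would fix an arbitrary enumeration $\mathcal{R}^{-}=\{r_1,\dots,r_k\}$, pick for each $r_i$ an arbitrary incoming edge $e_i=(p_i,r_i)$, and set $E=\{e_1,\dots,e_k\}$. Since the heads $r_1,\dots,r_k$ are pairwise distinct, $E$ contains exactly one incoming edge of each node of $\mathcal{R}^{-}$ and no incoming edge of a visible reticulation, which is property (b). Writing $N_j=N-\{e_1,\dots,e_j\}$, the core of the argument is the claim that every node of $N$ is reachable from $\rho(N)$ in $N_j$, proved by induction on $j$. The case $j=0$ is the definition of a rooted acyclic digraph. For the inductive step I delete $e_{j+1}=(p_{j+1},r_{j+1})$ from $N_j$; note $r_{j+1}$ still has its second parent, say $q$, since its incoming edges were never touched. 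Because $N$ is acyclic, $r_{j+1}$ is not an ancestor of $q$, so a path from $\rho(N)$ to $q$ in $N_j$ (one exists by the induction hypothesis) avoids $r_{j+1}$ and hence avoids $e_{j+1}$, so it survives in $N_{j+1}$; thus $q$, and then $r_{j+1}$ via the edge $(q,r_{j+1})$, are reachable in $N_{j+1}$. For an arbitrary node $x$, a path from $\rho(N)$ to $x$ in $N_j$ either avoids $e_{j+1}$ and survives, or it decomposes as a prefix ending $\cdots\to p_{j+1}\to r_{j+1}$ followed by a suffix from $r_{j+1}$ to $x$ that does not use $e_{j+1}$; splicing a path from $\rho(N)$ to $r_{j+1}$ in $N_{j+1}$ in front of that suffix reaches $x$ in $N_{j+1}$. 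This proves the claim.

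Finally I would combine everything: by the claim with $j=k$, $N-E$ contains a directed path from $\rho(N)$ to every node, so $N-E$ is a subnetwork of $N$ with the same root and the same leaf set $X$, and by Lemma~\ref{Prop1} every node visible in $N$ is visible in $N-E$. Any reticulation of $N-E$ has in-degree at least $2$ there, so (as $N$ is binary) both of its incoming edges in $N$ lie outside $E$; since $E$ meets an incoming edge of every node of $\mathcal{R}^{-}$, every reticulation of $N-E$ is a reticulation of $N$ that is visible in $N$, hence visible in $N-E$. Therefore $N-E$ is a reticulation visible subnetwork over the same leaves as $N$, which is property (a).

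The hard part is really just the reachability invariant: one must be sure that the successive deletions do not cascade into stranding a leaf, and the ``reroute through the surviving parent'' step is exactly what rules that out, relying on acyclicity so that the surviving parent is not itself cut off by the deletion. I would also flag that the near-stability hypothesis is not essential to this particular statement; its role is to supply the fact that every parent of a non-visible reticulation is visible, which is what gets used later in the paper.
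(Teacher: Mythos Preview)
The paper does not supply its own proof of Proposition~\ref{NS_1}; it is quoted from Gambette et al.\ (2015). Your argument is close to a proof, but there is one genuine gap. From the reachability invariant you correctly conclude that every original leaf of $N$ remains reachable in $N-E$, and you then assert that $N-E$ has \emph{the same} leaf set $X$. But reachability does not rule out the creation of new outdegree-$0$ nodes: if an internal tree node $p$ had both of its children in $\mathcal{R}^{-}$ and your arbitrary choices happened to delete both outgoing edges of $p$, then $p$ would become a new leaf, and $N-E$ would fail to be a network over the same leaves as $N$ (and, as the paper needs next, would not be a subdivision of a binary reticulation visible network).

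This gap is fillable, but precisely by using near-stability---so your closing remark that the hypothesis is inessential is not right. The observation you recorded (each parent of a node in $\mathcal{R}^{-}$ is visible) first shows that every such parent is a tree node: a visible reticulation has a single child which inherits its visibility and hence cannot lie in $\mathcal{R}^{-}$. One then checks that no tree node $p$ can have both children $r_1,r_2$ in $\mathcal{R}^{-}$: if $p$ is visible with respect to a leaf $\ell$, then $\ell$ is below each $r_i$ (a path to $\ell$ avoiding $r_{3-i}$ must pass through $p$ and then $r_i$), and the other parent $q_i$ of $r_i$ must be a proper descendant of $p$ (else $\ell$ would be reachable from $\rho(N)$ via $q_i$ while avoiding $p$); acyclicity and $q_i\neq r_{3-i}$ (that edge would join two adjacent non-visible nodes) then force $q_1$ below $r_2$ and $q_2$ below $r_1$, giving a directed cycle. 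With this in hand, \emph{any} choice of $E$ leaves every non-leaf with an outgoing edge, and your reachability argument together with Lemma~\ref{Prop1} completes the proof.
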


	\begin{figure}[!b] 
		\begin{center}
			\includegraphics*[width=0.9\textwidth]{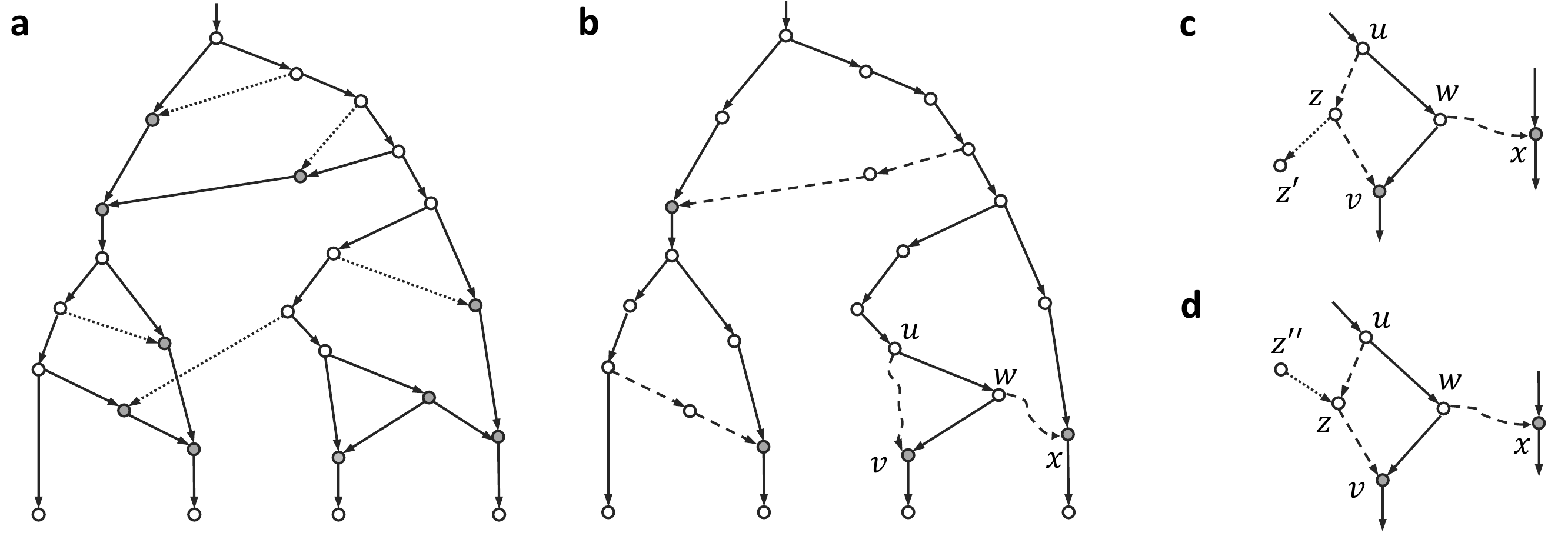}	
		\end{center}
		\caption{ ({\bf a}) A nearly-stable network $N$ with four leaves. It has nine reticulations (shaded circles), five of which are not visible.
The round dot edges are those removed to obtain the reticulation visible network $N'$ in ({\bf b}).  The dashed paths in $N'$ are the cross and non-cross paths removed to obtain a subtree with the same leaves as $N$. 
 (\textbf{c}) and (\textbf{d}) are two cases considered in the proof of the part (c) in Lemma~\ref{NS_2}: 
  a non-cross path from $u$ and $v$ contains a tree node $z$ of $N$, and it contains a reticulation node $z$ of $N$  \label{Fig4}}
	\end{figure}	

	Let $E$ be the set of edges satisfying the two properties in Proposition~\ref{NS_1} and let 
	$N' = N - E$  (Figure~\ref{Fig4}a).  The edges in $E$ are said to be \emph{{\rm NS}-edges}.  We remark that  $N'$ is a subdivision of a binary reticulation visible network. That is, the  reticulation visible network can be obtained from $N'$ by replacing some paths whose internal nodes are of degree 2 with  directed edges with the same orientation.   Hence, $N'$ contains degree-2 nodes if $E$ is not empty. 
	
	For a path $P$, we use $\mathcal{IV}(P)$ to denote the set of its internal nodes. Since $N'$ is a subdivision of a binary reticulation visible network with the same leaves as $N$, by Theorem~\ref{stable_1}, there is a set $\mathcal{P}$ of paths  in ${N'}$ such that
	(i) each path $P \in \mathcal{P}$ is  from a degree-3 tree node to a visible reticulation  node in $N'$ and its internal nodes are all of degree-2 in $N'$, and (ii) ${N'} - \cup_{P \in \mathcal{P}} \mathcal{IV}(P) - \cup_{P \in \mathcal{P}} {\cal E}(P)$ is a subtree with the same  leaves as $N$. 

Let $T={N'} - \cup_{P \in \mathcal{P}} \mathcal{IV}(P) - \cup_{P \in \mathcal{P}} {\cal E}(P)$. $T$ is obtained from the removal of the internal nodes and edges of the paths in ${\cal P}$. 
	We can classify the paths in $\mathcal{P}$ as \emph{cross paths} and \emph{non-cross paths} accordingly as in Section 3 (Figure~\ref{Fig4}b).

	\begin{lemma}\label{NS_2} 
Let $N$ be a nearly-stable network and let $E$, $N'$, $T$ and  $\mathcal{P}$ be defined above.  

{\rm (a)}  Every internal node in a path in $\mathcal{P}$ is not visible in $N$. 

{\rm (b)} Each cross path in $\mathcal{P}$ consists of either a single edge or two edges in $N$.

{\rm (c)} Each non-cross path  in $\mathcal{P}$ is simply an edge in $N$.

{\rm (d)} If $P$  is a cross path in $\mathcal{P}$
from $w$ to $x$ and $P'$ is a non-cross path in $\mathcal{P}$ from $u$ to $v$  such that  $w$ is between $u$ and $v$ {\rm (}Figure~\ref{Fig4}b{\rm )}, then $P$ and $P'$ are both a single edge in $N$. 

{\rm (e)} Every two distinct paths in $\mathcal{P}$ are node disjoint. 
	\end{lemma}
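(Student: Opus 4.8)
The plan is to prove the five parts in order, using the tree $T$ together with the nearly-stable property of $N$, and to record early on the structural fact that \emph{every internal node of a path in $\mathcal{P}$ is a non-visible reticulation node of $N$}.

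Part (a) is almost immediate. If $z$ is internal to some $P\in\mathcal{P}$ then $z\notin\mathcal{V}(T)$, so for every leaf $\ell$ the unique path from $\rho$ to $\ell$ in the subtree $T$ is a path from $\rho$ to $\ell$ in $N$ that avoids $z$; hence $z$ is not visible in $N$. This already forces $z$ to be a reticulation of $N$: since $z$ has degree $2$ in $N'$ but degree $3$ in $N$, exactly one edge at $z$ lies in $E$; if that edge were an outgoing edge $(z,v)$ then $v$ would be a non-visible reticulation of $N$ by part (b) of Proposition~\ref{NS_1}, and nearly-stability applied to $(z,v)$ would make $z$ visible --- contradicting what we just proved --- so the $E$-edge at $z$ is incoming and $z$ is a (non-visible) reticulation of $N$.

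Given this, (b) is short: if a cross path $P$ had two internal nodes $z_1,z_2$ consecutive along $P$, then $(z_1,z_2)$ would be an edge of $N$ joining two non-visible nodes, contradicting the nearly-stable property; so $P$ has at most one internal node, hence at most two edges, which are edges of $N$ since $N'$ is a subgraph of $N$. The same argument shows that a non-cross path has at most one internal node, so for (c) it remains to rule out a non-cross path $P'$ from $u$ to $v$ with a single internal node $z$. Here $z$ is a non-visible reticulation of $N$ with parents $u$ and some $u'$ (with $(u',z)\in E$) and child $v$, while $u$ is a degree-$3$ tree node of $N'$ and $v$ is a reticulation of both $N$ and $N'$; moreover $u$ and $v$ lie on a common trivial path $P_i$ of $T$ with $u$ an ancestor of $v$. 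I would derive a contradiction in the style of Proposition~\ref{Stable_3}: using that the edge $(u',z)$ is bypassable and that $u$ sits above $v$ on $P_i$, Lemma~\ref{lemma22} produces, for every leaf below the relevant node, a path from $\rho$ avoiding that node (leaves not below it being handled by $T$), so that some node which the hypotheses guarantee to be visible --- $u$ in $N$, or a reticulation forced to be visible by the retained reticulation-visibility of $N'$ --- turns out not to be. The case in which $z$ were instead a tree node of $N$ (the second panel of Figure~\ref{Fig4}) is already excluded by the structural fact. Part (d) is handled the same way: by (c) the non-cross path $P'$ is a single edge, and if the cross path $P$ from $w$ to $x$ had an internal node $z$ --- again a non-visible reticulation of $N$ with parents $w,w'$ and child $x$ --- then, since $w$ lies strictly between $u$ and $v$ on $P_i$, one bypasses $w$ (through the part of $P_i$ above $w$, or through $w'$) and applies Lemma~\ref{lemma22} to contradict the visibility of a reticulation on $P_i$ or of $x$. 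This case analysis for (c) and (d) is the main technical obstacle.

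Finally (e) follows combinatorially. By (a)--(c) each path in $\mathcal{P}$ has at most one internal node, which is a reticulation of $N$ (hence has a unique incoming and a unique outgoing edge in $N'$), its start is a degree-$3$ tree node of $N'$, and its end is a reticulation of $N'$. If distinct $P,Q\in\mathcal{P}$ shared a node $y$, then $y$ cannot be internal to both, since the uniqueness of the incoming and the outgoing $N'$-edge at a reticulation of $N$ would force $P=Q$; $y$ cannot be internal to one and an endpoint of the other, since an endpoint must be a degree-$3$ tree node or a reticulation of $N'$, incompatible with $y$ having degree $2$ in $N'$; and $y$ cannot be a shared endpoint, since two paths leaving a common start along distinct edges would make that start a leaf of $T$, two paths entering a common end along distinct edges would leave that end with indegree $0$ in $T$ although it is not $\rho$, and if the two extreme edges coincide then $P=Q$ --- each alternative contradicting that $T$ is a subtree with exactly the root and the leaves of $N$. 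Hence the paths of $\mathcal{P}$ are pairwise node-disjoint.
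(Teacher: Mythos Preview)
Your treatment of (a), (b), and (e) is sound, and the observation you front-load in (a) --- that every internal node of a path in $\mathcal{P}$ is necessarily a \emph{reticulation} of $N$ --- is a genuine streamlining; the paper only reaches this conclusion inside the case analysis of (c).

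The gap is in (c), and it propagates to (d). Your plan for (c) is to show that $u$ is not visible in $N$ and then invoke nearly-stability on the edge $(u,z)$; this is a valid target, since $z$ is non-visible and so $u$ is indeed forced to be visible. But your sketch of why $u$ is non-visible does not go through. To avoid $u$ for leaves below $v$ you intend to use the detour $\rho\to u'\to z\to v$, and for that you need a path from $\rho$ to $u'$ avoiding $u$. You have not located $u'$: nothing you wrote rules out the possibility that $u'$ sits strictly between $u$ and $v$ on the trivial path $P_i$ in $T$. In that case $u'$ is a tree node of $N$ whose unique parent lies on $P_i$ below $u$, and every path from $\rho$ to $u'$ passes through $u$, so Lemma~\ref{lemma22} cannot be applied with $R=\{z\}$ or $R=\{v\}$ as you suggest. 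The paper closes this hole with an intermediate step you omit: \emph{every node strictly between $u$ and $v$ on $P_i$ is non-visible in $N$}, proved by bypassing it via $P_T(\rho,u)+P'+P_T(v,\ell)$. This gives two things you need: by nearly-stability there is then a \emph{unique} such node $w$ with $u\to w\to v$ in $T$; and $u'$ cannot be $w$ (degree count) nor internal to a $\mathcal{P}$-path (else $u'$ and $z$ are adjacent and both non-visible), so $u'\in\mathcal{V}(T)$ and acyclicity forces $u'$ not below $u$, yielding the avoiding path.

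The same missing lemma is what makes (d) short. Once you know $w$ is non-visible, an internal node of the cross path $P$ would be a non-visible child of $w$, and the edge out of $w$ immediately contradicts nearly-stability. Your proposed alternative --- contradicting the visibility of $x$ or of ``a reticulation on $P_i$'' --- is both vaguer and harder, since both parents of $x$ feed into $x$ and you have no obvious way to bypass it.
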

	\begin{proof} We remark that  $P_{T}(x, y)$ denotes the
unique path from $x$ to $y$ for any two nodes $x$ and $y$ in $T$. 
	
	(a)  Let $P$ be a path in $\mathcal{P}$ and let  $y$ be an internal node of it.  For any leaf 
	$\ell\in {\cal L}(T)$, the unique  path $P_{T}(\rho, \ell)$ 
does not pass $y$ in $T$. Hence,  $y$ is not visible in $N$. 

(b) If there are two or more internal nodes in a path in $\mathcal{P}$, 
by (a), they are  consecutive  and  not visible  in $N$, contradicting that $N$ is nearly-stable.

(c) We use $\rho$ to denote the root of $N$, which is also the root of $N'$ and $T$. Let $P$ be a non-cross path between $u$ and $v$, where $u$ and $v$ are in some path $P_i$ in $T$. 
Note that $P_{T}(u, v)$ is a sub-path of $P_i$  and is
 internally disjoint from $P$. 
 By Fact (2) in Proposition~\ref{31}, there is an internal node $w$ in $P_{T}(u, v)$ that is the start node of a cross path 
 $P(w, x)$ in $\mathcal{P}$.
 
First,  any node $y$ between $u$ and $v$ in $P_{T}(u, v)$ is not visible. This is because for any network leaf $\ell$ not below $v$ in $T$, $P_T(\rho, \ell)$ does not pass through $y$, and  for any network leaf $\ell$ below $v$ in $T$, $P_T(\rho, \ell)-P_T(u, v) + P$ is a path not passing through $y$.  Therefore,  $w$ must be the unique internal node of $P_{T}(u, v)$.
That is, $w$ is the child of $u$ and the parent of $v$ in $P_{T}(u, v)$.

Assume that $P$ is not an edge in $N$. By (a), there is a unique degree-2 node $z$ between $u$ and $v$ in $P$.   We consider the following two cases.

If $z\in {\cal T}(N)$ (Figure~\ref{Fig4}c),  then  the other outgoing  edge  $(z, z')$  had been removed 
to obtain $N'$. That is, 
$(z, z')\in E$. By the definition of $E$,
 $z'$ is a reticulation node and not visible in $N$. That $z$ and $z'$ are both not visible  contradicts that $N$ is 
nearly-stable.

If $z\in {\cal R}(N)$ (Figure~\ref{Fig4}d), then the other incoming edge $(z'', z)$  had been removed to obtain $N'$. 
Note that $z''\neq u$ and $z''\neq w$, as $w$ has degree 3 in $N'$. 
In addition, $z''$ is not an internal node of a path in $\mathcal{P}$. (Otherwise, by (a),  $z$ and $z''$ are both not visible). So $z''$ is a node in $T$.  Clearly $z''$ is not below $v$ and hence not below $w$ in $T$. (Otherwise $N$ has a cycle.)  Hence, $P_T(\rho,z'')$ does not pass through $u$. 

Consider a network leaf 
$\ell \in {\cal L}(N)$. If it is not below $v$, then $P_T(\rho, \ell)$ does not pass through $u$. If $\ell$ is below $v$, then  $P_{T}(\rho, z'') +(z'', z) + (z,v) +P_{T}(v, \ell)$ is a path not passing through $u$ in $N$. Therefore,  $u$ is not visible. That $u$ and $w$ are both not visible in $N$ contradicts that $N$ is nearly-stable. 

(d) By the proof of (c),  $P'$ is a single edge in $N$ and $w$ is the only node in $P$ and not visible. Thus $P$ must be an edge in $N$. (Otherwise by (a) $w$ and its child in $P$ are  not visible, contradicting that $N$ is nearly-stable.)

(e) It can be easily derived from the definition of the cross path. 
 \qed
	\end{proof}

	Let $C\in \mathcal{P}$ be a cross path from $w$ to $x$. Then, $x$ is a visible  reticulation node in $N$. It may have as many as two reticulation parents that are not visible.  Let $U_x = parent(x) \cap \mathcal{UR}(N)$, where $parent(x)$ is the set of all parents of $x$ and $\mathcal{UR}(N)$ is the set of all  reticulation nodes that are not visible in $N$. $|U_x|=0, 1,$ or $2$.  Define the cost of $C$ as:
	\begin{equation}
	\label{weight_def}
	c(C)=\begin{cases}2+|U_x|\;\;\mbox{if there is a non-cross edge jumping over $w$}, \\1+|U_x|\;\;\mbox{otherwise,}
	\end{cases} 
	\end{equation}
	where $2$ is used to count $x$ and the other child of $w$ which is a visible  reticulation node if there is a non-cross edge jumping over $w$. 
	
	As in Section~\ref{sec:Stable}, we let $P_0$ denote the trivial path whose first edge is the incoming edge to $\rho$
	and let $P_1,...,P_{2n-2}$ denote the other $2n-2$ trivial paths in $T$.
	We charge the cost of a cross path to the trivial path $P_i$ in $T$ in which the cross path enters and call it the weight of $P_i$.  The weight of $P_i$ is denoted by $w(P_i)$.
	If a trivial path does not contain any end node of the cross paths in $\mathcal{P}$,  its weight is set to be 0.
	
	Each visible reticulation node contributes to at least one unit of weight. By the definition of nearly-stable networks, any reticulation node that is not visible must have a visible reticulation node  as its child, and by the proof of Lemma~\ref{NS_2} (c), any reticulation node that is not visible in $N$ must be in some $U_x$, $x$ being the end node of a cross path, so it also contributes to at least one unit weight. Therefore, $|\mathcal{R}(N)| \leq \sum_{i = 0}^{2n-2} w(P_i)$.  To bound this, we first establish a useful lemma.

	As in Section~\ref{sec:Stable},   we  use $P_{t3}$ to denote the trivial path entering $t$ and $P_{t1}$, $P_{t2}$ to denote  the trivial paths leaving $t$ for a node $t$ of degree 3 in $T$.
	
	\begin{figure}[!t] 
		\begin{center}
			\includegraphics*[width=0.5\textwidth]{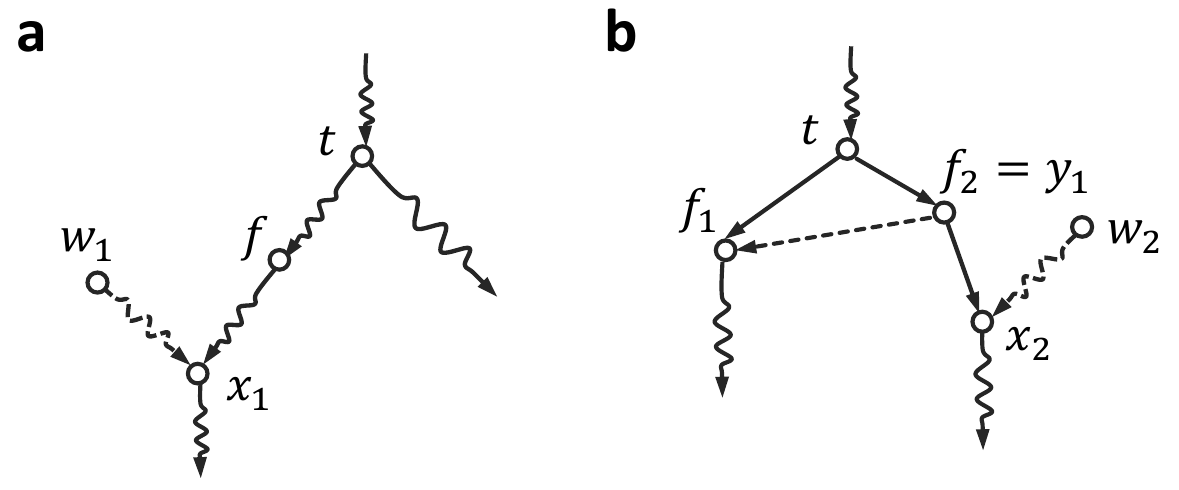}	
		\end{center}
		\caption{The two cases considered in  the proof of  Lemma~\ref{NS_33}. Solid arrows and curves represent the edges and paths in $T$, while square dot arrows and curves represent the removed edges and paths. $f_1$ is the reticulation child of $t$ in a trivial path $P_{t1}$ leaving $t$. $f_2$ is the child of $t$ in the trivial path $P_{t2}$.
			The path from $w_2$ to $x_2$ is a cross path entering $x_2$.
			({\bf a}) A cross path from $w_1$ to $x_1$ and $f$ is a node between $t$ and $x_1$, where  $x_1$ is in  $P_{t1}$.
			({\bf b})	The unique tree node $f_2$ between $t$ and $x_2$ is also a parent of $f_1$ in $N$ \label{Fig5}}
	\end{figure}
	
	\begin{lemma}
		\label{NS_33}
		Let $P_{tj}$ be a trivial path defined above and let $C_j$ be a cross path from $w_j$ to $x_j$, where  $x_j$ is in $P_{tj}$ and  $j \in\{ 1, 2\}$. Define $j'=3-j$.
		
		{\rm (a)}   The tree path $P_{T}(t, x_j)$ consists of  either a single edge or two edges in $P_{tj}$.  If $P_{T}(t, x_j)$ includes two edges, the internal node is not visible.
		
		{\rm (b)} If there exists a  node $f_j$ between $t$ and $x_j$ in $P_{tj}$ and $f_j$ is a reticulation node in $N$, then no cross-path enters the other trivial path $P_{tj'}$. 
		
		{\rm (c)} Assume that $C_j$ contains an internal node that is a reticulation node in $N$. If there is a cross-path $C_{j'}$ from $w_{j'}$ to $x_{j'}$ such that  $x_{j'}$ is in $P_{tj'}$,   then $w_j$ is not in
		$P_{T}(t, x_{j'})$. 
	\end{lemma}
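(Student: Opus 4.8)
The plan is to prove all three parts by the same mechanism: assume the stated conclusion fails, then exhibit for some node $y$ and every leaf $\ell$ of $N$ a root-to-$\ell$ walk that avoids $y$ — so that $y$ is not visible — and finally contradict the nearly-stable property of $N$ (no edge may have both endpoints invisible) or acyclicity. The reroutings rest on three facts from Lemma~\ref{NS_2} and the structure of $T$: a cross path of $\mathcal{P}$ has at most two edges in $N$, a non-cross path is a single edge, and every internal node of a path of $\mathcal{P}$ is invisible and is not a node of $T$; in addition, for a leaf $\ell$ not below a node $y$ in $T$ the tree path $P_{T}(\rho,\ell)$ avoids $y$, and a reticulation one of whose incoming edges has been deleted can be reached through that edge, bypassing its parent in $T$. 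Throughout I use that a node lying on a cross path of $\mathcal{P}$ whose head is in a trivial path $Q$ is not in $Q$ and, by acyclicity, is not a descendant of that head.

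\emph{Part (a).} The path $P_{T}(t,x_j)$ is a subpath of $P_{tj}$, so its internal nodes are degree-$2$ nodes of $T$ and are consecutive. I claim each such internal node $y$ is invisible in $N$. For a leaf $\ell$ not below $x_j$ in $T$, $P_{T}(\rho,\ell)$ avoids $y$. For a leaf $\ell$ below $x_j$, the walk $P_{T}(\rho,w_j)+C_j+P_{T}(x_j,\ell)$ avoids $y$: since $C_j$ is a cross path, $w_j\notin P_{tj}$ and, by acyclicity, $w_j$ is not a descendant of $x_j$, hence not of $y$, so $P_{T}(\rho,w_j)$ misses $y$; the internal nodes of $C_j$ are not nodes of $T$, so they are not $y$; and $P_{T}(x_j,\ell)$ runs strictly below $x_j$. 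Hence $y$ is invisible. If $P_{T}(t,x_j)$ had two internal nodes, they would be adjacent and both invisible, contradicting nearly-stability; so $P_{T}(t,x_j)$ has at most one internal node — that is, consists of one or two edges — and that node, if present, is invisible.

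\emph{Part (b).} Suppose $f_j$ is a reticulation strictly between $t$ and $x_j$ on $P_{tj}$. By (a) it is the unique internal node of $P_{T}(t,x_j)$, so $(t,f_j)\in\mathcal{E}(N)$, and by Lemma~\ref{NS_2}(a) it is invisible; nearly-stability then forces $t$ to be visible, say with respect to a leaf $\ell^{*}$, which therefore lies below $t$ in $T$. Assume towards a contradiction that a cross path $C_{j'}$ of $\mathcal{P}$ also enters $P_{tj'}$ at a node $x_{j'}$. I would reroute the root-to-$\ell^{*}$ traffic around $t$: if $\ell^{*}$ is on the $P_{tj}$-side of $t$ (below $f_j$), use the second incoming edge of $x_j$ coming from $C_j$ and, if necessary, the deleted incoming edge of $f_j$; if $\ell^{*}$ is on the $P_{tj'}$-side, use $C_{j'}$. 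In each case this yields a root-to-$\ell^{*}$ walk avoiding $t$, contradicting that $t$ is visible, so no cross path can enter $P_{tj'}$. The substantive step — the one I expect to be the main obstacle — is the positional bookkeeping: an auxiliary entry point ($w_j$, $w_{j'}$, or the tail of a deleted edge into $f_j$ or $x_j$) may itself be a descendant of $t$, and then one must reach it around $t$ through a chain of reticulations; this is precisely the setting of Lemma~\ref{lemma22}, applied with $u=t$ and $R$ a suitable set of reticulations below $t$, and pinning down that set is the delicate part.

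\emph{Part (c).} By the hypothesis and Lemma~\ref{NS_2}(b), $C_j$ has exactly one internal node, a reticulation $z$ of $N$ with $(w_j,z)\in\mathcal{E}(N)$, and $z$ is invisible by Lemma~\ref{NS_2}(a). Suppose towards a contradiction that $w_j\in P_{T}(t,x_{j'})$. As the start of a cross path of $\mathcal{P}$, $w_j$ is a degree-$3$ tree node of $N'$ one of whose out-edges (namely $(w_j,z)$, the first edge of $C_j$) is deleted in forming $T$, so $w_j$ has degree $2$ in $T$; hence $w_j\neq t$, and $w_j\neq x_{j'}$ since $x_{j'}$ is a reticulation. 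Therefore $w_j$ lies strictly between $t$ and $x_{j'}$ on the trivial path $P_{tj'}$, whose segment between $w_j$ and $x_{j'}$ contains no branch node, so every leaf below $w_j$ in $T$ is below $x_{j'}$ in $T$. Now $w_j$ is invisible: for a leaf $\ell$ not below $x_{j'}$ in $T$ (equivalently, not below $w_j$), $P_{T}(\rho,\ell)$ avoids $w_j$; for a leaf $\ell$ below $x_{j'}$, the walk $P_{T}(\rho,w_{j'})+C_{j'}+P_{T}(x_{j'},\ell)$ avoids $w_j$, because $w_{j'}\notin P_{tj'}$ is not a descendant of $x_{j'}$ and hence not of $w_j$, because $w_j$ is neither $w_{j'}$ nor $x_{j'}$ nor an internal node of $C_{j'}$ (the latter not being nodes of $T$), and because $P_{T}(x_{j'},\ell)$ runs below $x_{j'}$. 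Thus the edge $(w_j,z)$ has both endpoints invisible, contradicting nearly-stability; hence $w_j\notin P_{T}(t,x_{j'})$.
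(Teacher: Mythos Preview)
Parts (a) and (c) are correct and track the paper's arguments; in (c) the paper simply cites part (a) to see that the unique internal node of $P_T(t,x_{j'})$ is invisible and then notes that $w_j$ equals that node, whereas you reprove the invisibility of $w_j$ directly via the detour through $C_{j'}$, but nothing differs in substance. One small slip: in (b) you justify the invisibility of $f_j$ by Lemma~\ref{NS_2}(a), but $f_j$ is a node of $T$, not an internal node of a path in $\mathcal{P}$; the correct reference is part (a) of the present lemma.

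Part (b), however, is only a plan, and the case analysis you defer is where the content lies. Your strategy---show that $t$ is invisible via Lemma~\ref{lemma22} and contradict nearly-stability on the edge $(t,f_j)$---is exactly the paper's, but the paper actually pins down the set $R$. It first applies (a) to $P_T(t,x_{j'})$ and splits: if $x_{j'}$ is a child of $t$, or if the unique intermediate node $f_{j'}$ is a reticulation in $N$, then both children of $t$ in $N$ are reticulations and Lemma~\ref{lemma22} applies. The substantive subcase is $f_{j'}$ a \emph{tree} node of $N$. Here one must first rule out $(f_{j'},f_j)\in\mathcal{E}(N)$: if this were an edge, then $f_{j'}$ is invisible by (a) and the adjacent pair $f_j,f_{j'}$ already violates nearly-stability. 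This exclusion is essential, because it guarantees that the tail $y$ of the deleted NS-edge into $f_j$ satisfies $y\neq f_{j'}$; only then can one verify that each of $y$ (a parent of $f_j$) and $w_{j'}$ (a parent of $x_{j'}$) is either below the other element of $R=\{f_j,x_{j'}\}$ or reachable from $\rho$ avoiding $t$, as Lemma~\ref{lemma22} requires. Your sketch does not isolate this exclusion, and without it the ``$P_{tj}$-side'' rerouting through the deleted incoming edge of $f_j$ can land at $f_{j'}$, which is below $t$ but below neither $f_j$ nor $x_{j'}$, so the argument stalls.
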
 
	\begin{proof} Note that  $\{j, 3-j\}=\{1, 2\}$ for $j=1, 2$. Without loss of generality, we may assume that  $P_{tj}=P_{t1}$ and $P_{tj'}=P_{t2}$, that is $j=1$ and $j'=3-j=2$. 
		
		(a) Let $f$ be a node between $t$ and $x_1$ in $P_{t1}$
		(Figure~\ref{Fig5}a) and let $\ell$ be a leaf in $N$. If  $\ell$ is not below $x_1$ in $T$, 
		the path $P_T(\rho,\ell)$ does not pass through $f$.
		
		Let $\ell$ be a leaf below $x_1$ in $T$. Since $w_1$ is not in $P_{t1}$ in $T$, the tree path $P_{T}(\rho, w_1)$ does not pass $f$. By  Lemma~\ref{lemma22}  there is a path from $\rho$ to $\ell$  that avoids $f$. 
		Therefore, $f$  is not visible.  
		
		Since $N$ is nearly-stable, there is at most one  node 
		in $P_{T}(t, x_1)$, as each  internal node is not visible.

		(b) Suppose on the contrary, there is a cross path $C_2$ from $w_2$ to $x_2$ entering 
		$P_{t2}$, where $x_2$ is in $P_{t2}$.  By  (a), $x_2$ is a child of $t$ or there is a unique node $f_2$ between $t$ and $x_2$ in $P_{t2}$.  We first show that $t$ is not visible in $N$.
		
		If $x_2$ is a child of $t$ or there is a node $f_2$ in $P_{T}(t, x_2)$ such that $f_2$ is a reticulation node in $N$,  $t$ has two reticulation children in $N$. By Lemma~\ref{lemma22}, $t$ is not visible. 
		
		If $P_{t2}$ contains a node $f_2$ between $t$ and $x_2$  in $N$,   $(f_2, f_1)$ must not be an edge in $N$. Otherwise, as shown in Figure~\ref{Fig5}b,   $f_1$ and $f_2$ are then  not visible, contradicting that $N$ is nearly-stable. 
		
		Let $(y_1, f_1)$ be the edge removed from $f_1$ in the process of transforming $N$ to $N'$.  Since  $y_1 \neq f_2$, either $y_1$ is below $x_2$ or there is a path from $\rho$ to $y_1$ that avoids $t$. 
		
		Since $w_2$ is in another trivial path and there is no node between $t$ and $f_1$ in $P_{t1}$,  $w$ is either 
		below $f_1$  or the path $P_{T}(\rho, w_2)$ does not pass $t$.
		
		Since the reticulation nodes $f_1, x_2$ are below $t$ and satisfy the condition in Lemma~\ref{lemma22}, there is a path from $\rho$ to $\ell$ that avoids $t$ for any leaf $\ell$ below $f_1$ or $x_2$.
		For any leaf  $\ell$ below neither $f_1$ nor $x_2$, it is not below $t$ and the path $P_{T}(\rho, \ell)$ does not pass through  $t$.  Therefore, $t$ is also not visible.
		
		The fact that $t$ and $f_1$ are both not visible contradicts that $N$ is nearly-stable. This implies that there is no cross path entering $P_{t2}$. 
		
		(c) If $x_2$ is the child of $t$ in $P_{t2}$, the case is trivial. 
		
		Assume that there is an internal node $f_2$ between $t$ and $x_2$ in $P_{t2}$. By the fact (a), $f_2$ is not visible.  If $w_1=f_2$, then $w_1$ and its child in $P_1$ are both not visible, contradicting $N$ is nearly-stable network.
		\qed
	\end{proof}

\begin{figure}[!bh] 
	\begin{center}
		\includegraphics*[width=0.9\textwidth]{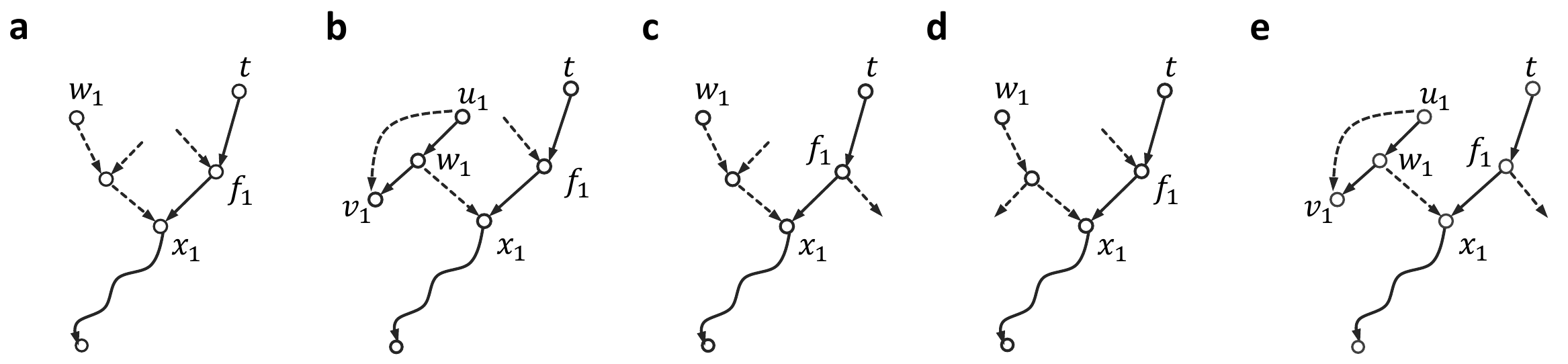}	
	\end{center}
	\caption{  
		({\bf a})-({\bf b})	Two types of trivial paths of weight 3.
		({\bf c})-({\bf e}) Three types of trivial paths of weight 2. Solid arrows and curves represent the edges and paths in $T$, while   square dot arrows and curves represent the removed edges and paths. The  path from $w_1$ to $x_1$ is the cross path ending at a node in a trivial path leaving $t$
		\label{Fig6}}
\end{figure}	

\begin{proposition} \label{NS_4}
	For an internal node $t$ of degree $3$ in $T$,
	
	{\rm (a)} $w(P_{t1})\leq 3$ and $w(P_{t2})\leq 3$. 
	
	{\rm (b)} if  $w(P_{tj})=3$, then  $w(P_{t(3-j)})=0$, where $j\in \{1, 2\}$.
	
	{\rm (c)} if $P_{t3} \neq P_0$ and  $w(P_{t1})=w(P_{t2})=2$, then $w(P_{t3})=0$. Moreover,  assume $p(t)$ is the degree-3 ancestor of $t$ such that  $P_{t3} = P_{p(t)1}$. Then $w(P_{p(t)2})\leq 2$.
	
	{\rm (d)} if $P_{t3} = P_0$, then $w(P_{t3}) = 0$ and $w(P_{t1})+w(P_{t2})\leq 3$.
\end{proposition}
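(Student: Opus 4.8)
The plan is to reduce every part to the cost formula~\eqref{weight_def} combined with Lemma~\ref{NS_33}, starting from the observation that, by the analogue of Proposition~\ref{31}(1) for the reticulation visible network $N'$, at most one cross path of $\mathcal{P}$ enters each trivial path of $T$; hence $w(P_{tj})=c(C_j)$ for the unique cross path $C_j$ from $w_j$ to $x_j$ entering $P_{tj}$, and $w(P_{tj})=0$ when there is none. A structural fact used throughout is that the two parents of $x_j$ in $N$ are the last-but-one node of $C_j$ and the parent of $x_j$ in $T$ (which lies on $P_{tj}$, above $x_j$). For part (a), since $c(C_j)\le 2+|U_{x_j}|\le 4$ it suffices to exclude $c(C_j)=4$: if $C_j$ is a single edge $(w_j,x_j)$ then one parent of $x_j$ is the tree node $w_j$, so $|U_{x_j}|\le 1$ and $c(C_j)\le 3$; if $C_j$ has two edges, Lemma~\ref{NS_2}(d) forbids a non-cross edge over $w_j$, so $c(C_j)=1+|U_{x_j}|\le 3$. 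For part (b), tracking the equality cases shows that $c(C_j)=3$ forces the parent of $x_j$ in $T$ to be a non-visible reticulation node; since this node lies on $P_{tj}$ and cannot be the tree node $t$, it is a reticulation strictly between $t$ and $x_j$, so Lemma~\ref{NS_33}(b) gives $w(P_{t(3-j)})=0$.

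For part (c), assume $P_{t3}\neq P_0$ and $w(P_{t1})=w(P_{t2})=2$. If a reticulation lay strictly between $t$ and $x_j$ on $P_{tj}$ for some $j$, Lemma~\ref{NS_33}(b) would make $w(P_{t(3-j)})=0$, a contradiction; so none does, and by Lemma~\ref{NS_33}(a) the only possible node strictly between $t$ and $x_j$ is a non-visible tree node. Feeding this into~\eqref{weight_def} forces each weight-$2$ path into one of the two configurations in which the parent of $x_j$ in $T$ is \emph{not} a non-visible reticulation: either $C_j=(w_j,x_j)$ is a single edge with a non-cross edge $(u'_j,v'_j)$ over $w_j$ whose head $v'_j$ is a visible reticulation lying just below $w_j$, or $C_j$ has two edges whose middle node $m_j$ is a (non-visible) reticulation. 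Assuming now that a cross path enters $P_{t3}$, let $r$ be the lowest reticulation on $P_{t3}$. As in the proof of Proposition~\ref{Stable_3}(i) one shows that $w_1,w_2$ lie neither on $P_{T}(t,x_2)$, $P_{T}(t,x_1)$ (using Lemma~\ref{NS_33}(c), the disjointness in Lemma~\ref{NS_2}(e), and the location of the $v'_j$) nor on $P_{T}(r,t)$ (otherwise a reticulation would sit strictly below $r$, contradicting the choice of $r$); with acyclicity this yields, for each $j$, that $w_j$ is below $x_{3-j}$ or reachable from $\rho$ by a path avoiding $r$. Lemma~\ref{lemma22} then makes $r$ non-visible with respect to every leaf below it, while $P_{T}(\rho,\ell)$ avoids $r$ for every leaf $\ell$ not below $r$ --- a contradiction. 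Hence $w(P_{t3})=0$.

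For the remaining claim of part (c), write $P_{t3}=P_{p(t)1}$ and suppose $w(P_{p(t)2})=3$; by the characterization in part (b) together with Lemma~\ref{NS_33}(a), the head of the cross path entering $P_{p(t)2}$ then has a non-visible reticulation child $f'$ of $p(t)$ as its parent in $T$, so near-stability of the edge $(p(t),f')$ makes $p(t)$ visible; but the weight-$2$ structure below $t$ just described, fed into Lemma~\ref{lemma22}, contradicts the visibility of $p(t)$, giving $w(P_{p(t)2})\le 2$. For part (d), if $P_{t3}=P_0$ then $t$ is an ancestor of every other degree-$3$ node of $T$, so a cross path into $P_0$ would be a back edge and create a directed cycle; thus $w(P_0)=0$, exactly as in Proposition~\ref{Stable_3}(ii). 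By parts (a) and (b), $w(P_{t1})+w(P_{t2})$ can exceed $3$ only if $w(P_{t1})=w(P_{t2})=2$; but then the weight-$2$ analysis of part (c) makes $C_1,C_2$ single edges $(w_1,x_1),(w_2,x_2)$, and the argument of Proposition~\ref{Stable_3}(ii) carries over verbatim --- either some $w_j$ lies on $P_0$, forcing a non-visible reticulation on $P_0$ (impossible since $w(P_0)=0$), or $w_1$ is below $x_2$ and $w_2$ below $x_1$, a directed cycle. Either way we obtain a contradiction, so $w(P_{t1})+w(P_{t2})\le 3$.

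The main obstacle I anticipate is the two ``lowest reticulation'' arguments in parts (c) and (d): making the location analysis of $w_1,w_2$ (relative to $t$, to $x_{3-j}$, and to the hypothetical cross path on $P_{t3}$ or $P_0$) precise enough to invoke Lemma~\ref{lemma22}, and in particular the moreover-clause of (c), where one must first extract from $w(P_{p(t)2})=3$ the exact local picture at $p(t)$ and then play it against the weight-$2$ structure at $t$.
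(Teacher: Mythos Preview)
Your treatment of parts (a) and (b) is essentially the paper's argument. The real gap is in part (c), and it propagates to (d).

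You correctly isolate the two weight-$2$ configurations: (A) $C_j=(w_j,x_j)$ is a single edge with a non-cross edge $(u'_j,v'_j)$ jumping over $w_j$; (B) $C_j$ has two edges whose middle node $m_j$ is a non-visible reticulation, with no non-cross edge over $w_j$. But your ``lowest reticulation $r$ on $P_{t3}$'' step only works in configuration (A): there, if $w_j$ lay on $P_T(r,t)$, the head $v'_j$ of the non-cross edge would be a visible reticulation on $P_{t3}$ strictly below $r$, contradicting the choice of $r$. In configuration (B) there is no $v'_j$, and nothing you have written produces a reticulation on $P_{t3}$ below $r$; the one-line justification ``otherwise a reticulation would sit strictly below $r$'' simply does not cover this case. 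The same omission resurfaces in (d), where you assert that ``the weight-$2$ analysis makes $C_1,C_2$ single edges $(w_1,x_1),(w_2,x_2)$'' --- this is false precisely in configuration (B), so the appeal to Proposition~\ref{Stable_3}(ii) verbatim is unjustified.

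The paper handles (c) by a different route, and this is where the missing idea lives. Rather than arguing about a hypothetical $r$ on $P_{t3}$, it first shows directly via Lemma~\ref{lemma22} that $t$ itself is not visible (your location analysis of $w_1,w_2$ relative to $P_T(t,x_{3-j})$ suffices for this). Near-stability then forces $x_1,x_2$ to be \emph{children} of $t$, since otherwise the non-visible $f_j$ and the non-visible $t$ would be adjacent. Next one shows that the parent $y$ of $t$ on $P_{t3}$, if $y\neq p(t)$, is also not visible; this requires a genuine case split according to whether $y$ equals $w_j$ or equals the other parent $z_j$ of $m_j$ in configuration (B) --- exactly the case your argument skips. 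Since $(y,t)$ would then have both endpoints non-visible, one concludes that $P_{t3}$ is the single edge $(p(t),t)$, which is strictly stronger than $w(P_{t3})=0$. That adjacency is what drives the ``moreover'' clause: one shows $p(t)$ is not visible and invokes near-stability on the edge $(p(t),t)$. Your alternative route to the ``moreover'' (force $p(t)$ visible from $(p(t),f')$, then show $p(t)$ not visible via Lemma~\ref{lemma22}) also needs to exclude $w_j\in P_{t3}$, which you have not established.
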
	
\begin{proof}
	(a) We only prove that  $w(P_{t1})\leq 3$. 
	If there is no non-cross edge jumping over the start node of the cross path entering $P_{t1}$, by Eqn.~(\ref{weight_def}), the weight of $P_{t1}$ is at most 3.

	If there is a non-cross edge jumping over the start node $w_1$ of the cross-path $C_1$ ending at a node $x_1$ in $P_{t1}$, by the fact (d) of  Lemma~\ref{NS_2},  $C_1$ is equal to 
	the single edge $(w_1, x_1)$.  Therefore, $x_1$ has at most one reticulation parent, which is in $P_{t1}$ if exists. By Eqn.~(\ref{weight_def}), 
	$w(P_{t1})\leq 3$.
	
	(b) Assume  $w(P_{t1})=3$. Then, there is a cross path $C_1$ from $w_1$ to $x_1$ where $x_1$ is in $P_{t1}$. If there is no non-cross edge jumping over $w_1$,  by Eqn.~(\ref{weight_def}), $x_1$ has  two reticulation parents (Figure~\ref{Fig6}a). 
	
	If there is a non-cross edge jumping over $w_1$,  by the fact (d) of Lemma~\ref{NS_2}, 
	$C_1$ is equal to a single edge $(w_1, x_1)$, and by Eqn.~(\ref{weight_def}), $x_1$ has
	one reticulation parent $f_1$ in $P_{t1}$ (Figure~\ref{Fig6}b). 
	By the fact (b) of Lemma~\ref{NS_33}, there is no cross path that enters $P_{t2}$, implying  	$w(P_{t2})=0$.

	(c) Assume $P_{t3} \neq P_0$ and  $w(P_{t1})=w(P_{t2})=2$. Let $C_j$ be the cross path from $w_j$ to $x_j$, with $x_j$ in $P_{tj}, j =1,2$.  Since $w(P_{t1})=w(P_{t2})=2$,  by the fact (b) of Lemma~\ref{NS_33},  there is no reticulation node between $t$ and $x_j$ for each $j$. Hence, 
	for each $j$, either  the parent of $x_j$ in  $C_j$ is a reticulation node and not visible (Figure~\ref{Fig6}c), or there is a non-cross edge $(u_i,v_i)$ jumping over $w_i$ (Figure~\ref{Fig6}e).

	
	By the facts (a) and (b)  of Lemma~\ref{NS_33}, either $x_j$ is the child of $t$ in $P_{tj}$ or there is a tree node $f_j$ between $t$ and $x_j$ in $P_{tj}$ for $j\in \{1, 2\}$. 
	
	Assume that there is a tree node $f_j$ between $t$ and $x_j$ in $P_{tj}$, $j\in \{1, 2\}$. Let $j'=3-j$.  
	If $C_{j'}$ has an internal node that is a reticulation, by the fact (c) of Lemma~\ref{NS_33}, $w_{j'} \neq f_j$. 
	
	If there is a non-cross edge jumping over $w_{j'}$,  by the fact (d) of Lemma~\ref{NS_2},  that $w_{j'}=f_j$ implies that 
	the endpoints of the non-cross edge are also between $t$ and $x_j$. This is impossible, as there is only $f_j$ between $t$ and $x_j$.  Therefore, $w_{j'} \neq f_j$. Similarly,  $w_{j} \neq f_{j'}$.
	
	We have proved that for $j=1,2$, $w_j$ is not between $t$ and $x_{j'}$. Thus,  $w_j$ is either below $x_{j'}$ or there is a path from $\rho$ to $w_j$ that does not pass $t$. Therefore, by Lemma~\ref{lemma22}, there is a path from $\rho$ to $\ell$ not passing through $t$ for any leaf below either $x_1$ or $x_2$. For any leaf $\ell$ below neither $x_1$ nor $x_2$, since it is not below $t$ in $T$,  $P_{T}(\rho, \ell)$ does not contain $t$. Therefore, $t$ is not visible. This  also implies that $x_1$ and $x_2$ are children of $t$.

	Assume $p(t)$ is the start node of $P_{t3}$ and $P_{t3} = P_{p(t)1}$.
	We further prove that $P_{t3}$ consists of only an edge  $(p(t), t)$ in $N$.
	
	Assume on the contrary there are nodes between $p(t)$ and $t$ in $P_{p(t)1}$.  We consider the parent $y$ ($\neq p(t)$) of $t$ in the trivial path $P_{p(t)1}$.  If $y$ is a reticulation node, that $t$ is not visible implies that $y$ is also not visible, a contradiction. Hence $y$ must be a tree node in $N$. We consider the following two cases.
	
	{\bf Case 1}. $y$ is equal to $w_j$ or equal to the other parent of the internal node of $C_j$ for some $j\in\{1, 2\}$. 
	
	Without loss of generality, we may assume $j=1$ (Figure~\ref{Fig7}a and b). This implies that there is no non-cross edge jumping over the cross path 
	$C_1$ and there is a reticulation node $z_1$ in $C_1$. 
	
	When $y=w_1$, let $(z'_1, z_1)$ be the edge removed from $z_1$ in the first stage. Since $z'_1$ is a parent of $z_1$, if $z'_1$ is  below $y$, it must be below $x_2$.  When $y\neq w_1$, $w_1$ is  below $x_2$ if it is below $y$.

	Similarly, $w_2$ is below $x_1$ and thus below $z_1$ if it is below $y$.  
	
	The set of reticulation nodes $\{z_1, x_2\}$ and  $y$ and satisfy the condition in Lemma~\ref{lemma22}, so there is path from $\rho$ to $\ell$ that avoids $y$ for any leaf $\ell$ below $z_1$ and $x_2$. If $\ell$ is below neither $z_1$ nor $x_2$, it is not below $y$, and $P_{T}(\rho, \ell)$ does not pass through $y$. Hence, $y$ is not visible.
	
	{\bf Case 2}. $y$ is neither $w_j$ nor the other parent of the internal node of $C_j$ for each $j=1,2$ (Figure~\ref{Fig7}c). 
	
	In this case, for each $j=1,2$,  $x_j$ is either below $x_{j'}$ or  there is a path from $\rho$ to $w_j$ that avoids $y$. Applying Lemma~\ref{lemma22} on the set of reticulations $\{x_1, x_2\}$ and $y$, we conclude that there is a path from $\rho$ to $\ell$ that avoids $y$ for any leaf $\ell$ below $x_1$ or $x_2$. Clearly,  for any leaf $\ell$ not below $x_1$ or $x_2$, $P_T(\rho, \ell)$ avoids $y$. Therefore,  $y$ must be not visible. That $y$ and $t$ are two consecutive nodes and not visible contradicts that $N$ is nearly-stable.
	
	After proving that the path $P_{p(t)1}$ is actually an edge 
	$(p(t), t)$, we now prove that $w(P_{p(t)2})\leq 2$.  Assume on the contrary $w(P_{p(t)2})\geq 3$. Then, the child $f_3$ of $p(t)$ in $P_{p(t)2}$ must be a reticulation node (Figure~\ref{Fig7}d). Then, the set of reticulations 
	$\{f_3, x_1, x_2\}$ and $p(t)$ satisfy the conditions in Lemma~\ref{lemma22}, so there exists a path 
	from $\rho$ to $\ell$ that does not pass through $p(t)$ for any leaf $\ell$ below $p(t)$ in $T$. For any leaf $\ell$ not below $p(t)$, the tree path $P_{T}(\rho, \ell)$ does not pass through $p(t)$. Hence, $p(t)$ is not visible. That  
	$p(t)$ and $t$ are not visible contradicts that $N$ is nearly-stable network. 
	
		(d) If $P_{t3} = P_0$, then $t$ is an ancestor of any degree-3 node in $T$. Since $N$ is acyclic, there does not exist any cross path $C \in \mathcal{P}$ from $w$ to $x$, such that $x \in P_0$ while $w \in P_i$ for $i>0$. Hence $w(P_0)=0$.
	
	If the weight of $P_{t1}$ and $P_{t2}$ are both 2, and if $w_i$ is the start node of the cross path $C_i$ that enters $P_{ti}$ for $i=1,2$, either $w_1$ or $w_2$ is a node in $P_0$. Following the proof of fact (c), we conclude that  $t$ and its parent in $P_0$ are both not visible, a contradiction.
	\qed
\end{proof}

\begin{figure}[!t] 
	\begin{center}
		\includegraphics*[width=0.9\textwidth]{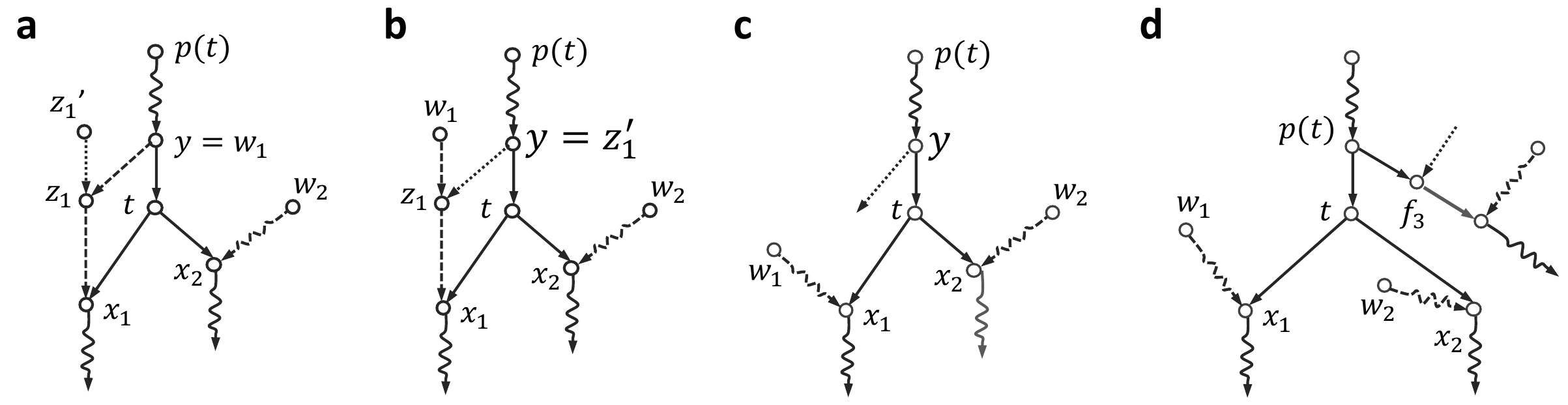}	
	\end{center}
	\caption{The four cases that are considered in the proof of the fact (c) in Proposition~\ref{NS_4}. 
	The path from  $w_i$ to $x_i$  is a cross path ending at $x_i$ in a trivial path $P_{ti}$ leaving $t$ for $i=1, 2$. ({\bf a})  $w_1$ is an internal node in $P_{t3}$.
		({\bf b})  $y$ is the other parent of the unique internal node of the cross path from  $w_1$ to $x_1$. ({\bf c})  neither $w_1$ nor  $w_2$ is  equal to $y$  and has a common child with $y$. (\textbf{d})  $p(t)$ is the parent of $t$ and has a reticulation node $f_3$ as the other child. Solid arrows and curves represent the edges and paths in $T$,  round dot arrows represent edges in $E$ that were removed to form $N'$, and square dot arrows and curves represent the edges and paths that were removed to transform $N'$ to $T$. 
		\label{Fig7}}
\end{figure}

	\begin{theorem} 
		Let $N$ be a nearly-stable network  with $n$ leaves. Then, 
		$|\mathcal{R}(N)| \leq 3(n-1). $
	\end{theorem}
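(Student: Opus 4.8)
The plan is to run exactly the charging argument used above for reticulation visible networks, with Proposition~\ref{NS_4} playing the role of Proposition~\ref{Stable_3}. As already observed just before Proposition~\ref{NS_4}, every reticulation node of $N$ contributes at least one unit to some $w(P_i)$, so $|\mathcal{R}(N)| \leq \sum_{i=0}^{2n-2} w(P_i)$, and no cross path can enter $P_0$ because $N$ is acyclic, so $w(P_0)=0$. Writing $V$ for the set of the $n-1$ internal degree-$3$ nodes of $T$ and $P_{v1},P_{v2}$ for the two trivial paths leaving $v\in V$, it therefore suffices to prove $\sum_{v\in V}\bigl(w(P_{v1})+w(P_{v2})\bigr)\leq 3(n-1)$.

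First I would bound $s(v):=w(P_{v1})+w(P_{v2})$. By Proposition~\ref{NS_4}(a) each summand is at most $3$, and by (b) one of them being $3$ forces the other to be $0$; hence $s(v)\leq 4$, with $s(v)=4$ only when $w(P_{v1})=w(P_{v2})=2$. Partition $V$ into $V_k=\{v\in V: s(v)=k\}$, $0\leq k\leq 4$, so that $\sum_k|V_k|=n-1$ and $\sum_{v\in V}s(v)=|V_1|+2|V_2|+3|V_3|+4|V_4|$. The heart of the argument is to control $|V_4|$. For $v\in V_4$, Proposition~\ref{NS_4}(d) rules out $P_{v3}=P_0$ (that would give $s(v)\leq 3$), so $P_{v3}$ starts at a genuine degree-$3$ node, call it $p(v)$, and after relabelling the two trivial paths leaving $p(v)$ we may take $P_{v3}=P_{p(v)1}$. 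Proposition~\ref{NS_4}(c) then gives $w(P_{p(v)1})=w(P_{v3})=0$ and $w(P_{p(v)2})\leq 2$, whence $s(p(v))\leq 2$ and $p(v)\in V_0\cup V_1\cup V_2$. I then bound the fibres of $p(\cdot)\colon V_4\to V_0\cup V_1\cup V_2$: if distinct $v,v'\in V_4$ satisfy $p(v)=p(v')=u$, then $P_{v3}$ and $P_{v'3}$ are distinct trivial paths both leaving $u$ (their lower endpoints are $v$ and $v'$), so they are precisely the two paths $P_{u1},P_{u2}$ leaving $u$, both of weight $0$, forcing $u\in V_0$; hence at most two elements of $V_4$ map to any node of $V_0$ and at most one to any node of $V_1\cup V_2$, giving $|V_4|\leq 2|V_0|+|V_1|+|V_2|$.

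Substituting this bound finishes the proof:
\[
\sum_{v\in V}s(v)=|V_1|+2|V_2|+3|V_3|+3|V_4|+|V_4|\leq 2|V_0|+2|V_1|+3|V_2|+3|V_3|+3|V_4|\leq 3(n-1),
\]
and adding $w(P_0)=0$ yields $|\mathcal{R}(N)|\leq 3(n-1)$. The step I expect to be delicate is the fibre count: one has to use that the trivial path $P_{v3}$ entering a degree-$3$ node determines that node as its lower endpoint, so that two members of $V_4$ with a common image $u$ really do exhaust, and zero out, both trivial paths leaving $u$ — everything else is the bookkeeping inherited essentially verbatim from the reticulation visible case.
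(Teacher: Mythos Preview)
Your proposal is correct and follows essentially the same argument as the paper's own proof: partition the degree-$3$ nodes of $T$ by the value of $s(v)=w(P_{v1})+w(P_{v2})$, use Proposition~\ref{NS_4}(a),(b) to cap $s(v)$ at $4$, use (c),(d) to send each $v\in V_4$ to $p(v)\in V_0\cup V_1\cup V_2$ with the stated fibre bounds, and finish with the same substitution. Your write-up is in fact slightly more explicit than the paper's at two points (deriving $s(v)\le 4$ from (a),(b), and justifying why a collision $p(v)=p(v')$ forces the image into $V_0$), but the structure and every essential step coincide.
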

	\begin{proof}
		Let $V$ denote the set of internal nodes of degree 3 in $T$, and let
		$$ V_i = \{v \in V \;|\; w(P_{v1}) + w(P_{v2}) = i \}.$$
		
		For any $v\in V$, we define  $p(v)$ to be the start node of the trivial path $P_{v3}$ that enters $v$. By Proposition~\ref{NS_4} (c) and (d), that $v \in V_4$ implies $p(v) \in V_0 \cup V_1 \cup V_2$. Additionally,  there are at most two different nodes $v'$ and $v''$ in $V_4$ such that $p(v')=p(v'')\in  V_0$, as there are only  two trivial  paths leaving a degree-3 tree node in $T$;
		for different $v'$ and $v''$, if $p(v')$ and $p(v'')$ are in $ V_1 \cup V_2$, then $p(v')\neq p(v'')$.  Taken together, the two facts imply that   $|V_4| \leq 2 |V_0| + |V_1| + |V_2|$. 
		
		Since $w(P_0)=0$, 
		$$\begin{array}{rll}
		|\mathcal{R}(N)| &= &\sum _{v\in V}[w(P_{v1})+w(P_{v2})]\\
		&=&     |V_1|+2|V_2|+3|V_3|+4|V_4| \\
		&\leq&  2|V_0|+2|V_1|+3|V_2|+3|V_3|+3|V_4|\\
		&\leq&   3(|V_0|+|V_1|+|V_2|+|V_3|+|V_4|)\\
		&=&      3(n-1).
		\end{array}$$		\qed
	\end{proof}

	\section{ Stable-child network}
	\label{sec:StableChild}
	
	 The stable-child network shown in Figure~\ref{Fig8}a has as many as $7(n-1)$ reticulation nodes. In this section, we shall prove that a stable-child network can have  $7(n-1)$ reticulation nodes at most.

We first transform a stable-child network to a reticulation visible network and then to a binary tree with the same leaves by removing some edges into reticulations nodes.

	 \begin{figure}[!t] 
	 	\begin{center}
	 		\includegraphics*[width=0.6\textwidth]{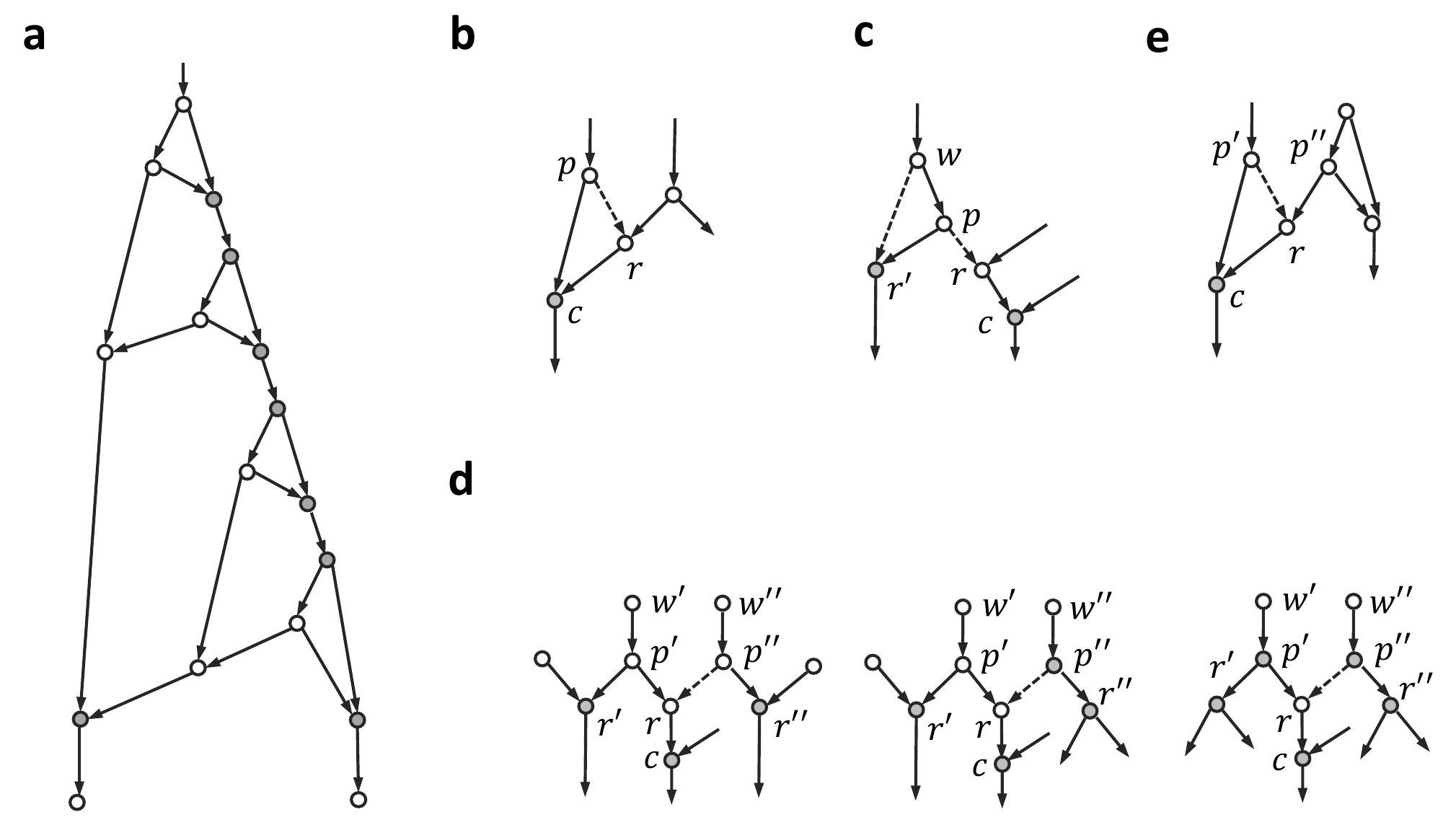}	
	 	\end{center}
	 	\caption{ ({\bf a}) A stable-child network with 2 leaves that has as many reticulation nodes as possible. 
In a stable-child network, there are three possible  local structures at a reticulation node $r$ if it is not visible: 
 ({\bf b}) $r$ and its child $c$ have a common parent. ({\bf c}) $r$ has a parent $p$ and  a sibling $r'$ under $p$ such that $r'$ and $p$ has a common parent $w$. ({\bf d}) Neither ({\bf b}) nor ({\bf c}) is true. 
 ({\bf e}) Both ({\bf b}) and ({\bf C}) occur at the same time \label{Fig8}}
	 \end{figure}

	 \begin{proposition}\label{SC_1}
	 	Let $N$ be a stable-child network. There is a set of edges $E$ such that {\rm (1)} $N-E$ is a subdivision of a reticulation visible network over the same leaves as $N$, and {\rm (2)} $E$ contains exactly an incoming edge for a reticulation node if it is not visible in $N$.
	 \end{proposition}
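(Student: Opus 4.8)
The plan is to follow the pattern of Proposition~\ref{NS_1}: for each reticulation node $r$ that is not visible in $N$ I shall place exactly one of its two incoming edges into $E$, making the choice carefully, and then argue that $N-E$ is a subdivision of a reticulation-visible network with the same leaves. The observation that makes a choice available is that \emph{both parents of a non-visible reticulation are tree nodes}. Indeed, if a parent $p$ of such an $r$ were a reticulation, then $p$ has outdegree $1$, so $r$ is the unique child of $p$; as $N$ is stable-child, $p$ has a visible child, which forces $r$ to be visible, a contradiction. Since $N$ is a simple digraph, a non-visible reticulation $r$ thus has two distinct tree-node parents $p_1(r)\ne p_2(r)$, and we are free to put either $(p_1(r),r)$ or $(p_2(r),r)$ into $E$.

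The only way deleting such edges can destroy the network is by stripping some tree node of all of its outgoing edges: a reticulation never loses its single outgoing edge, because by the previous paragraph a reticulation is never the parent of a non-visible reticulation; and since every deleted edge enters a reticulation, no incoming edge of a tree node or a leaf is deleted, so all leaves survive, every non-root tree node keeps indegree $1$, the root keeps indegree $0$, and no node acquires indegree $0$ (a reticulation loses only one of its two incoming edges), so $\rho(N)$ remains the unique source and still reaches every node by a standard top-down argument. To avoid emptying a tree node, note that a tree node has outdegree at most $2$, hence is the parent of at most two non-visible reticulations; form the auxiliary multigraph $H$ with vertex set $\mathcal{T}(N)$ having one edge joining $p_1(r)$ and $p_2(r)$ for each non-visible reticulation $r$. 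Then $H$ has maximum degree $2$, so it is a disjoint union of simple paths and cycles. Orient every path and every cycle of $H$ consistently; then no degree-$2$ vertex of $H$ has both of its incident edges directed away from it. Finally, for each non-visible reticulation $r$, put into $E$ the incoming edge of $r$ whose tail is the \emph{source} of $r$'s oriented edge of $H$. With this rule, a tree node that is the parent of two non-visible reticulations is the source of at most one of the corresponding $H$-edges, hence retains at least one outgoing edge; every other tree node evidently retains one; so no node of $N-E$ is a dead end.

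It remains to observe that $N-E$ is then a subdivision of a reticulation-visible network over the leaves of $N$. By the previous paragraph $N-E$ is connected, rooted at $\rho(N)$, with the same leaf set as $N$, and with no dead end; every non-visible reticulation of $N$ now has indegree $1$, while every other node retains the degrees forced by being a tree node, a visible reticulation, or a leaf. Hence suppressing the degree-$2$ nodes (and, if the root is left with outdegree $1$, re-rooting at its unique child) yields a binary phylogenetic network $N''$ with the same leaves as $N$, of which $N-E$ is a subdivision, and whose reticulation nodes are precisely the reticulation nodes of $N$ that are visible in $N$. Now take such a reticulation $v$, visible in $N$ with respect to a leaf $\ell$: every path from $\rho(N)$ to $\ell$ in $N-E$ is also such a path in $N$ and hence passes through $v$ (the reasoning of Lemma~\ref{Prop1}), and since every root-to-$\ell$ path in $N''$ lifts to one in $N-E$, it too passes through $v$. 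Thus every reticulation of $N''$ is visible, i.e.\ $N''$ is reticulation visible. Finally, by construction $E$ contains exactly one incoming edge of each non-visible reticulation of $N$ and no incoming edge of any visible one, which is statement~(2).

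I expect the middle step to be the crux: making the deletions \emph{consistently} so that no tree node (the root included) is emptied, which is reduced here to orienting a multigraph of maximum degree~$2$. The structural trichotomy for non-visible reticulations depicted in Figure~\ref{Fig8}(b)--(e) is not logically required for this proposition, but it gives a concrete, case-by-case way to make the same choice of edges, and it will be the natural bookkeeping device for the $7(n-1)$ bound proved in the rest of the section.
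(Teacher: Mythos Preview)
Your argument has a genuine gap: you never verify that suppressing the degree-$2$ nodes of $N-E$ yields a \emph{simple} digraph. A binary phylogenetic network is simple by definition, so for $N-E$ to be a subdivision of one you must rule out the creation of parallel edges upon suppression. Your orientation of $H$ only guarantees that no tree node is emptied; it does nothing to prevent a degree-$2$ chain in $N-E$ from running parallel to an existing edge. Concretely, take the configuration of Figure~\ref{Fig8}(b): a non-visible reticulation $r$ with parents $p,q$ and child $c$, where $p$ is also a parent of $c$. Your rule may remove $(q,r)$; then $r$ is degree~$2$ in $N-E$ with parent $p$ and child $c$, and suppressing it produces a second copy of the edge $(p,c)$. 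The configuration of Figure~\ref{Fig8}(c) causes the analogous problem on the other side if you remove $(p,r)$ without also removing $(w,r')$. This is exactly why the paper's proof makes the case distinction and, in case~(c), even deletes a \emph{concealed} SC-edge into a visible reticulation---so its set $E$ is in fact larger than yours. Your closing remark that the trichotomy ``is not logically required for this proposition'' is therefore mistaken: some device equivalent to it is needed to control parallel edges, and the paper spends the second half of its proof on precisely this point (``there are no two internally disjoint paths from a common tree node to a common reticulation node in which each internal node is of degree~$2$'').

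A smaller observation: your auxiliary multigraph $H$ actually has maximum degree~$1$, not~$2$. By the stable-child property every tree node has a visible child, hence at most one non-visible reticulation child, so $H$ is a matching and your orientation step is vacuous. This is harmless for correctness but confirms that the real work lies not in avoiding dead ends (which is automatic) but in avoiding parallel edges (which your proof omits).
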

	 \begin{proof} 
	 	For  a reticulation node $r$ that is not visible in $N$, its unique child must be a visible reticulation node.  Furthermore, since each node has a visible child, 
	 	its parents both have a visible child other than $r$ and are both a tree node. To transform $N$ into a reticulation visible  network, we will delete one or two edges around a reticulation if it is not visible.

For each reticulation node $r$ that is not visible, we consider the following three cases.
If  $r$ and its unique child $c$ have a common parent $p$ (Figure~\ref{Fig8}b), then $(p, r)$ is removed.
	 	
If $r$ and its child $c$ do not have a common parent, but $r$ has a reticulation sibling $r'$ such that    the parent $w$  of  the common parent $p$ of $r$ and $r'$ is the other  parent of $r'$ (Figure~\ref{Fig8}c),    $(p,r)$ and $ (w, r')$ are then deleted at the same time.  
	 	
When neither occurs (Figure~\ref{Fig8}d), 
we arbitrarily select an incoming edge of $r$ to remove. 
	 	
	 	The edges removed in the above process is called \emph{SC-edges}.  Each  SC-edge is from  a tree node to  a reticulation node. A SC-edge is \textit{concealed} if the head is a visible reticulation node; it  is \textit{revealed} otherwise. Note that a concealed SC-edge is deleted only when the case shown in  Figure~\ref{Fig8}c is satisfied. Therefore, a concealed SC-edge jumps over the associated  revealed SC-edge that is removed at the same time. It is not hard to see that the SC-edges that are removed when different reticulation nodes are considered are  different. Let $E$ be the set of SC-edges.

	 	First,  we only deleted an incoming edge for each reticulation node and did not delete the incoming edge for each tree node, so the resulting network $N-E$ is connected. Second, $N-E$ has the same leaves  as $N$. The reasons for this include that (i) we do not remove any outgoing edge of a reticulation node, and (ii)  for any tree node $t$, if an outgoing edge of it  is removed,  the other outgoing edge   enters another tree node and thus  has never not been removed.
	 	
	 	Now, we show that $N-E$ is a subdivision of a binary reticulation visible network. Since we had deleted  an incoming edge for a reticulation node if it is not visible, all the remaining reticulation nodes are visible. $N-E$ is reticulation visible. We  can also show that  there are no two internally disjoint paths from a common tree node to a common reticulation node in which 
each internal node is of degree 2, implying that $N-E$ is a subdivision of a binary reticulation visible network.
	 	
	 	Assume on the contrary there are two internally disjoint path $P_1$ and $P_2$ between $u$ and $v$ such that their internal nodes are all of  degree 2. If neither $P_1$ nor $P_2$ is a single edge, then the two children of $u$ in $P_1$ and $P_2$ are both not visible, 
contradicting that $N$ is a stable-child network. Therefore, either $P_1$ or $P_2$ is a single edge from $u$ and $v$.  

Without loss of generality, we may assume that $P_2$ is equal to the edge $(u,v)$. 
 According to  the three rules which we used to remove the edges in $E$,  if an incoming edge of a node  is removed, its child in $N-E$ is visible in $N$. 
This implies that $N-E$ does not contain a path consisting of two or more degree-2 nodes that are not visible in $N$. Therefore, $P_1$ has exactly one internal node $x$.  If $x$ is a tree node in $N$, then, we removed an outgoing edge of  $x$ according to either  the second or third case. In the former case, we remove $(u,v )$ at the same time. In the later case, $(u, v)$ does not exist in $N$. This is impossible. 
When $x$ is a reticulation node, we removed an incoming edge of it. Again, 
 the edge $(u, v)$ does not exist in $N$ in each possible case, a contradiction.

	 We have proved that  $N-E$ is a subdivision of a binary network.
	 	\qed
	 \end{proof}

	 Let $N' = N-E$ be the  subnetwork obtained after the removal of the edges in $E$. 
	 $N'$ is a subdivision of a reticulation visible network.  By Theorem~\ref{stable_1},  there exist  a set of  paths $\mathcal{P}$ such that (i)  $T:=N - E - \cup_{P\in\mathcal{P}} \mathcal{IV}(P)-\cup_{P\in\mathcal{P}} \mathcal{E}(P)$ is a subtree of $N$ with the same leaves and (ii) all the internal nodes in each path in $\mathcal{P}$ are of degree 2.  

Again, we use  $P_0, P_1,...,P_{2n-2}$ to  denote the  trivial paths in $T$, where $P_0$ denotes the trivial path starting with  $\rho(N)$. As in the last section, a path in $\mathcal{P}$ is called a non-cross path if  its start and end nodes are both in $P_j$ for some $j$; it is called a cross path  otherwise.

	\begin{lemma} \label{SC_2}
Let $P$ be a path in $\mathcal{P}$.

{\rm (a)} Every internal node  in $P$ is not visible in $N$.	

{\rm (b)} If $P$ is a non-cross path, it is simply an edge.	
		
{\rm (c)} If  $P$ is a cross path and ends at a node $x$ in the trivial path $P_j$, every node 
 between the start node of $P_j$ and $x$ in $T$ is not visible in $N$.
		
		
{\rm (d)} If $P$ is a cross path and there is a non-cross path $P'$ jumping over it, then either $P$ is an edge or   the start node of $P$ is the parent of the end node of $P'$ in $T$.
	
	\end{lemma}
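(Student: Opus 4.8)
The plan is to follow the template of Lemma~\ref{NS_2} (and Lemma~\ref{NS_33}) for nearly-stable networks, replacing each appeal to the nearly-stable property by an appeal to the stable-child property, and using the structural facts established inside the proof of Proposition~\ref{SC_1}: $N'=N-E$ is a subdivision of a binary reticulation-visible network, every path of $\mathcal{P}$ starts at a degree-$3$ tree node of $N'$ and ends at a reticulation of $N'$, and whenever an edge of $E$ enters a node its child in $N'$ is visible in $N$. Write $\rho=\rho(N)$ and, for a node $x$ and a descendant $y$ of $x$ in $T$, let $P_T(x,y)$ denote the unique $x$-to-$y$ path in the tree $T$. Part (a) is immediate: an internal node $y$ of a path $P\in\mathcal{P}$ lies in $\mathcal{IV}(P)$, hence was deleted to form $T$, so $y\notin\mathcal{V}(T)$; for every leaf $\ell$ of $N$ the path $P_T(\rho,\ell)$ lies in $T\subseteq N$ and avoids $y$, so $y$ is not visible in $N$.

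The engine for (b) and (d) is the following bypass observation: if $P'$ is a non-cross path of $\mathcal{P}$ from $u$ to $v$ lying inside a trivial path $P_i$ of $T$, then every node $f$ strictly between $u$ and $v$ on $P_i$ is not visible in $N$. Indeed, an internal node of $P_i$ has out-degree $1$ in $T$, so the leaves of $T$ below $f$ are exactly those below $v$; for a leaf $\ell$ not below $v$ the path $P_T(\rho,\ell)$ already avoids $f$, and for a leaf $\ell$ below $v$ the concatenation $P_T(\rho,u)+P'+P_T(v,\ell)$ is a $\rho$-to-$\ell$ path in $N$ avoiding $f$ (the internal nodes of $P'$ are not in $T$, so $P'$ is internally disjoint from $P_T(u,v)$). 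For (b), suppose a non-cross path $P$ from $u$ to $v$ is not an edge; let $z$ be the child of $u$ on $P$, which is an internal node of $P$, hence not visible in $N$ by (a). By Proposition~\ref{31}(2), transported from the reticulation-visible setting of Section~\ref{sec:Stable} to $N'$ (where cross/non-cross \emph{edges} of the underlying network become cross/non-cross \emph{paths} of $N'$), there is a node strictly between $u$ and $v$ on $P_i$; hence the $P_i$-successor $s$ of $u$ is such a node and is not visible in $N$ by the bypass observation. Since $u$ is the start of a path of $\mathcal{P}$ it is a degree-$3$ tree node of $N'$ with out-neighbours exactly $s$ and $z$; as $N$ is binary, $u$ has no incident edge of $E$, so $s$ and $z$ are precisely the two children of $u$ in $N$. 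Both are non-visible, contradicting the stable-child property; hence $P$ is an edge.

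For (c), let $P$ be a cross path from $w$ to $x$ with $x$ on a trivial path $P_j$ starting at $t$, and let $f$ be strictly between $t$ and $x$ on $P_j$; since $f$ has out-degree $1$ in $T$, the leaves of $T$ below $f$ are exactly those below $x$, so $P_T(\rho,\ell)$ avoids $f$ for every leaf $\ell$ not below $x$. For $\ell$ below $x$: as $w$ lies on a trivial path other than $P_j$ and $N$ is acyclic, $w$ is not below $x$, so $P_T(\rho,w)$ avoids $f$; extending this path along $P$ gives a $\rho$-to-$p$ path avoiding $f$ with $p$ the parent of $x$ on $P$, and Lemma~\ref{lemma22} with $u\mapsto f$, $R\mapsto\{x\}$, $p(x)\mapsto p$ then yields a $\rho$-to-$\ell$ path avoiding $f$. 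Hence $f$ is not visible in $N$. For (d), let $P$ be a cross path with start node $w$, and let $P'$ be a non-cross path from $u$ to $v$ inside $P_i$ jumping over $P$, so $w$ lies strictly between $u$ and $v$ on $P_i$; by (b) $P'$ is the edge $(u,v)$, and by the bypass observation $w$ is not visible in $N$, and the $P_i$-successor $s$ of $w$ is not visible in $N$ unless $s=v$. Again $w$ is a degree-$3$ tree node of $N'$ with no incident $E$-edge, so its two children in $N$ are $s$ and the first node $z$ of $P$. If $P$ is not an edge then $z$ is an internal node of $P$, hence not visible by (a), so the stable-child property at $w$ forces $s$ to be visible, whence $s=v$; that is, $w$ is the parent of $v$ in $T$, the second alternative in (d).

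The one place where care is needed — and the only non-formal step — is the book-keeping behind the phrase ``its two children in $N$ are $s$ and $z$'': one must check that the start node of every path of $\mathcal{P}$ really is a tree node of $N$ of out-degree $2$ whose two out-neighbours are unchanged when passing from $N$ to $N'$ and then to $T$, which rests on $N$ being binary together with the fact that $E$ deletes only incoming edges of reticulations. One also needs the routine transfer of Proposition~\ref{31} from the reticulation-visible network underlying $N'$ to $N'$ itself, and, if one prefers first to bound the number of internal nodes of the paths of $\mathcal{P}$, the observation (already in the proof of Proposition~\ref{SC_1}) that $N'$ contains no two consecutive non-visible degree-$2$ nodes. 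Everything else reduces to a single application of the stable-child property or of Lemma~\ref{lemma22}.
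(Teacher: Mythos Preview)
Your proof is correct and follows essentially the same route as the paper's: part~(a) is identical, part~(c) is the same direct bypass (the paper concatenates $P_T(\rho,w)+P+P_T(x,\ell)$ explicitly where you invoke Lemma~\ref{lemma22}), and part~(d) is exactly the paper's argument that if neither $P_T(w,v)$ nor $P$ is a single edge then both children of $w$ in $N$ are non-visible, contradicting the stable-child property. For part~(b) the paper merely says it is ``essentially the restatement of fact~(c) in Lemma~\ref{NS_2}'', whereas you actually carry out the necessary adaptation---applying the stable-child property at the start node $u$ (whose two $N$-children are the $P_i$-successor $s$ and the first node $z$ of $P$) rather than the nearly-stable property along consecutive pairs---which is the right and only way to make that restatement precise; your book-keeping that $u$ and $w$ retain degree~$3$ from $N$ to $N'$ (hence have no incident $E$-edge) is exactly what is needed there.
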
	
	\begin{proof}
		(a) and (b) are essentially the restatement of the fact (a) and (c) in Lemma~\ref{NS_2}.

		
		(c) Let $y$ be a node between the start node of $P_j$ and $x$ in $T$. For any leaf $\ell$ that is not below $x$ in $T$, $P_T(\rho,\ell)$ is a path that does not pass $x$ and hence $y$. For any leaf $\ell$  below $x$ in $T$, the path $P_T(\rho, w) + P + P_T(x, \ell)$ avoids $y$, as $w$ is the start node of $P$ in a trivial path different from $P_j$. Hence,  $y$ is not visible in $N$.

		(d) By (b), $P'$ is simply an edge $(u, v)$ in $N'$. Let $P$ start at a node $w$.  If neither $P_{T}(w, v)$ nor  $P$ is a single edge,  the two children of $w$ in $P$ and $P_T(w, v)$ are both not visible, contradicting that $N$ is stable-child.
		\qed
	\end{proof}

Let $r$ be a reticulation node and not visible in $N$. Then,  
a revealed SC-edge $e_r$ was removed from $r$ to obtain $N'$ from $N$.  We define the cost $c(r)$ of $r$ to be:
\begin{equation*}
	c(r)=\begin{cases}2 & \mbox{if a concealed SC-edge is associated  with $e_r$}, \\1 &\mbox{otherwise.}
	\end{cases}
	\end{equation*}

Recall that  $U_s = parent(s) \cap \mathcal{UR}(N)$.
	We can define the cost of a cross path $C\in \mathcal{P}$ from $w$ to $x$ as follows:
	\begin{eqnarray}
	c(C)=\begin{cases}2 + \displaystyle{\sum_{r \in U_x \cup U_v} c(r)} & \mbox{if a non-cross edge $(u,v)$ jumps over $w$}, \\1+ \displaystyle{\sum_{r \in U_x} c(r)}  & \mbox{otherwise.}
	\end{cases}
	\label{SCformula}
	\end{eqnarray}
	We further charge the cost of $C$ to the trivial path $P_i$ to which $x$ belongs and call it the weight of $P_i$, written $w(P_i)$.
	If there is no cross path entering $P_i$, the weight of $P_i$  is set to be 0.
	
As for nearly-stable networks,  we have that
	$|\mathcal{R}(N)| \leq \sum_{i=0}^{2n-2}w(P_i).$

	For an internal node $t$ with degree-3 in $T$,  we still use  $P_{t1}$ and $ P_{t2}$ to denote  the trivial paths leaving $t$ and $P_{t3}$ to denote the trivial path entering $t$.

	\begin{proposition}\label{SC_3}
		For each internal node $t$ of degree 3 in $T$,
		
		{\rm (a)} $w(P_{tj}) \leq 6$, $j=1, 2$.
		
		{\rm (b)} $w(P_{t1}) + w(P_{t2}) \leq 10$.
		
		{\rm (c)} If $P_{t3} \neq P_0$ and $w(P_{t1}) + w(P_{t2}) \geq 8$, then $w(P_{t3}) = 0$. Moreover, assume $p(t)$ is the start node of $P_{t3}$ and $P_{t3} = P_{p(t)1}$. Then $w(P_{p(t)2}) \leq 4$.
		
		{\rm (d)} If $P_{t3} = P_0$, then $w(P_{t3}) = 0$ and $w(P_{t1}) + w(P_{t2}) \leq 7$.
		
	\end{proposition}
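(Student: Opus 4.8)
The plan is to mirror the proofs of Propositions~\ref{Stable_3} and~\ref{NS_4}. For each of (a)--(d) I would suppose the asserted bound fails, unfold the offending weight through the cost formula~(\ref{SCformula}) --- recall that $w(P_i)$ is the cost of the unique cross path entering $P_i$, uniqueness coming from Proposition~\ref{31}(1) applied to $T$ --- and then combine Lemma~\ref{SC_2}, the stable-child property (no node of $N$ has all of its children non-visible), and Lemma~\ref{lemma22} to exhibit a node $y$ that is non-visible in $N$ (by rerouting, for every leaf $\ell$, a $\rho$-to-$\ell$ path around $y$) yet still has only non-visible children --- the contradiction. As an SC-analogue of Lemma~\ref{NS_33}, the one fact I would use repeatedly is: if $C_j$ is the cross path entering $P_{tj}$, with end $x_j\in P_{tj}$ and start $w_j$, then by Lemma~\ref{SC_2}(c) every node strictly between $t$ and $x_j$ on $P_{tj}$ is non-visible in $N$; in particular, if $x_j$ is not the child of $t$, then that child is non-visible.

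Part (a) is then a bounded check on~(\ref{SCformula}). If no non-cross edge jumps over $w_j$, then $c(C_j)=1+\sum_{r\in U_{x_j}}c(r)\le 1+2|U_{x_j}|\le 5$, since $x_j$ is binary and $c(r)\le 2$. If a non-cross edge $(u_j,v_j)$ jumps over $w_j$, then by Lemma~\ref{SC_2}(b),(d) either $C_j=(w_j,x_j)$ --- in which case the parent of $x_j$ off $P_{tj}$ is the tree node $w_j$, so $|U_{x_j}|\le 1$, while $|U_{v_j}|\le 1$ since $u_j$ is a tree node, giving $c(C_j)\le 2+2+2=6$ --- or $w_j$ is the $T$-parent of $v_j$, in which case $v_j$'s two parents $u_j,w_j$ are both tree nodes, so $U_{v_j}=\emptyset$ and $c(C_j)\le 2+2|U_{x_j}|+0\le 6$. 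For part (b), note that $w(P_{tj})\ge 5$ is possible in the above only when the parent $a_j$ of $x_j$ on $P_{tj}$ lies in $\mathcal{UR}(N)$; then $a_j\neq t$, so $x_j$ is not the child of $t$, and by the fact above the child of $t$ on $P_{tj}$ is non-visible. Hence $w(P_{t1})\ge 5$ and $w(P_{t2})\ge 5$ would give $t$ two non-visible children, impossible; so one of the two weights is $\le 4$, and $w(P_{t1})+w(P_{t2})\le 6+4=10$.

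Parts (c) and (d) follow the corresponding parts of Proposition~\ref{NS_4} with the larger budget. For (c), assuming $P_{t3}\neq P_0$ and $w(P_{t1})+w(P_{t2})\ge 8$, a case analysis on which of $w(P_{t1}),w(P_{t2})$ exceed $4$ and on the cost breakdowns of~(\ref{SCformula}) makes $t$ non-visible; I would then show, exactly as in Proposition~\ref{NS_4}(c): no cross path enters $P_{t3}$ --- otherwise its lowest reticulation $r$ is non-visible by Lemma~\ref{lemma22} applied to the reticulations just below $t$ on $P_{t1},P_{t2}$ together with the relevant second parents --- so $w(P_{t3})=0$; $P_{t3}$ is the single edge $(p(t),t)$ in $N$, any interior node of $P_{t3}$ being non-visible by the same computation while adjacent to $t$; and $w(P_{p(t)2})\ge 5$ would force the child $f_3$ of $p(t)$ on $P_{p(t)2}$ to be a non-visible reticulation and, via Lemma~\ref{lemma22}, $p(t)$ itself non-visible, contradicting stability at $p(t)$ since $t$ is also non-visible --- hence $w(P_{p(t)2})\le 4$. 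Part (d) is the case $P_{t3}=P_0$: then $t$ is above every degree-$3$ node of $T$, so acyclicity forbids a cross path into $P_0$ and $w(P_0)=0$; and were the two weights to sum to more than $7$ we would be in the situation of (c) with the start nodes $w_1,w_2$ forced onto $P_0$, whence both $t$ and its parent on $P_0$ are non-visible --- a contradiction --- so $w(P_{t1})+w(P_{t2})\le 7$.

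I expect the real difficulty to lie in the case analysis underlying (a), and in its propagation into (c) and (d): the cost~(\ref{SCformula}) is layered --- summing $c(r)\in\{1,2\}$ over the (at most two) non-visible reticulation parents of the head $x$ and, in the jumping case, of the second child $v$ of $w$, each such $r$ possibly carrying a concealed SC-edge --- so realizing a near-maximal weight pins down a fairly large local neighbourhood (a short cross path, one or two non-visible reticulation parents, their concealed SC-edges, a non-cross edge with its reticulation head, and that head's own parents), and one must verify for each surviving configuration that it creates no node all of whose children are non-visible. Organizing this enumeration so that a single form of Lemma~\ref{lemma22} dispatches all the bad cases is the crux.
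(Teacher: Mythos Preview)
Your arguments for (a) and (b) are essentially the paper's and are fine. The genuine gap is in (c) and (d), where you repeatedly fall back on the \emph{nearly-stable} property (two adjacent non-visible nodes are forbidden) instead of the \emph{stable-child} property (every node has at least one visible child). In a stable-child network a long chain of consecutive non-visible nodes is perfectly legal, so your claim that ``$P_{t3}$ is the single edge $(p(t),t)$ in $N$, any interior node of $P_{t3}$ being non-visible \ldots\ while adjacent to $t$'' does not follow; your bound $w(P_{p(t)2})\le 4$ then collapses, because the contradiction you invoke (``$p(t)$ non-visible \ldots\ since $t$ is also non-visible'') needs $t$ to be a \emph{child} of $p(t)$, which you have not established. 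The same confusion recurs in (d): ``both $t$ and its parent on $P_0$ are non-visible'' is a nearly-stable contradiction, not a stable-child one.

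The paper's route through (c) avoids this. It first does a \emph{positional} case analysis on $w_1,w_2$ (not on the numerical weights as you propose): if $w_j$ lies in $P_T(t,x_{3-j})$ or in $P_{t3}$, one checks directly that $w(P_{t1})+w(P_{t2})\le 7$, so the hypothesis $\ge 8$ forces each $w_j$ either below $x_{3-j}$ or outside $P_{t3}$ and not below $t$. Lemma~\ref{lemma22} then makes $t$ \emph{and every internal node of $P_{t3}$} non-visible. This already gives $w(P_{t3})=0$ (a cross path into $P_{t3}$ would end at a visible reticulation that is simultaneously a non-visible internal node), and --- the point you are missing --- it makes the child of $p(t)$ on $P_{t3}$ non-visible \emph{whether or not that child is $t$}. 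Now the stable-child property at $p(t)$ forces the child of $p(t)$ on $P_{p(t)2}$ to be visible; hence if a cross path enters $P_{p(t)2}$ with end $x_3$, Lemma~\ref{SC_2}(c) forces $(p(t),x_3)$ to be an edge, and the remark at the end of (a) yields $w(P_{p(t)2})\le 4$. For (d), the same computation shows that $w(P_{t1})+w(P_{t2})\ge 8$ would make every node of $P_0$ non-visible --- and the contradiction is that $\rho(N)\in P_0$ is visible, not that two adjacent nodes are non-visible.
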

	\begin{proof}
		(a) We will only prove that $P_{t1}\leq 6$. Let $C_1$ denote the cross path sending at a node $x_1$ in $P_{t1}$. Let $w_1$ be the start node of $C_1$ in a trivial path different from $P_{t1}$. Note that  $U_s = \mathcal{UR}(N) \cap parent(s)$. 
		
		If there is no non-cross edge jumping over $w_1$, then there are at most 2 elements in $U_{x_1}$, and each element can have two unit cost at most. Thus,   by Eqn.~(\ref{SCformula}), $w(P_{t1})\leq 5$. 
		
		If there is a non-cross edge $(u_1,v_1)$	jumping over $w_1$. By the fact (b) in Lemma~\ref{SC_2}, $U_{v_1}$ is empty or a singleton. Moreover, by the fact (d) in Lemma~\ref{SC_2} (d), either $C_1$ is an edge, or $(w_1,v_1)$ is an edge in $T$. If $C_1$ is an edge, then $|U_{x_1}| \leq 1$.  If  $(w_1,v_1)$ is an edge, $|U_{v_1}| = 0$. Both implies that $|U_{x_1}| + |U_{v_1}| \leq 2$. Therefore,  by Eqn.~(\ref{SCformula}),  $w(P_{t1}) \leq 2 + 2(|U_{x}| + |U_{v}|) \leq 6$.
		
		We remark that, if the parent of $x_1$ in $T$ is not a reticulation node in $N$, then $|U_{x_1}| + |U_{v_1}| \leq 1$, and therefore $w(P_{t1}) \leq 4$. Equality holds only if there is a non-cross edge jumping over $x$.
	
		(b) If  $w(P_{t1})\neq 0$ and $w(P_{t2})\neq 0$, we assume that  the cross path $C_i$  ending at  a node $x_1$ in $P_{ti}$ starts at $w_i$ for $i=1, 2$. By the fact (c) of Lemma~\ref{SC_2}, every internal node in $P_T(t,x_1)$ and $P_T(t,x_2)$ is not visible. If there is a node between $t$ and $x_i$ for each $i$, then the two children of $t$ are not visible in $N$, contradicting that $N$ is stable-child. So  $t$ is the parent of either $x_1$ or $x_2$. Without loss of generality, we may assume that  $t$ is the parent of $x_1$ in $T$.  By the remark in the end of  the proof of (a), $w(P_{t1}) \leq 4$ and hence $w(P_{t1}) + w(P_{t2}) \leq 4 + 6 = 10$.

		(c) Assume that $P_{t3} \neq P_0$ and  $p(t)$ be the start node of $P_{t3}$ such that $P_{t3} = P_{p(t)1}$. If  $w(P_{t1}) + w(P_{t2}) \geq 8$, by  (a), the weights of $P_{t1}$ and $P_{t2}$ are both not zero. Hence,  there is a cross path $C_i$ ending at a node $x_i$  in $P_{ti}$ and starting at a node $w_i$ in a trivial path different from 
$P_{ti}$ for each $i = 1,2$.

		We first show that $w_i$ either  (i) below $x_{3-i}$, or (ii) neither in $P_{t3}$ nor below $t$ for $i=1$ and $2$.
Without loss of generality, we assume $i=1$. 
		
		{\bf Case 1}. $w_1$ is in $P_T(t,x_2)$ (Figure~\ref{Fig9}a). 
		
		If there is a non-cross edge $(u_1,v_1)$ jumping over $w_1$,  $v_1$ is either in $P_T(w_1,x_2)$ or  $v_1$ is below $x_2$ in $P_{t2}$. The former implies that $v_1$ is not visible, whereas the latter implies that $x_2$ is not visible.
This contradicts that both $v_1$ and $x_2$ are visible in $N'$. 
		
		Since $w_1$ is an internal node between $t$ and $x_2$ in $T$,  then $t$ is the parent of $x_1$ in $T$. 
By the fact (a) and (c) of Lemma~\ref{SC_2}, each  internal node in $C_1$ or $P_T(w_1,x_2)$ is not visible. Thus,  $(w_1,x_1)$ or $(w_1,x_2)$ is an edge. Otherwise,  the two children of $w_1$ are not visible, contradicting $N$ is stable-child network. 
		
		If $(w_1,x_1)$ is an edge in $N$, then $w(P_{t1}) = 1$ and hence $w(P_{t1}) + w(P_{t2}) \leq 7$, a contradiction. If $(w_1,x_2)$ is an edge, then $w(P_{t1}) \leq 3$. Since $w_1$ is not a reticulation, by the remark at the end of the proof of (a), 
 $w(P_{t2}) \leq 4$. Therefore, $w(P_{t1}) + w(P_{t2}) \leq 7$. This is impossible.

{\bf Case 2}. $w_1$ is a node in the path $P_{t3}$. 
		
		Without loss of generality, we can assume $w_1$ is lower than $w_2$ if $w_2$ is also in $P_{t3}$. We claim that $t$ and all the  internal nodes in $P_T(w_1,t)$ are not visible. Let $u$ be either $t$ or an internal node in $P_T(w_1,t)$.  If $w_2$ is in $P_T(t, x_1)$, the case is symmetric to Case 1. So there are two cases to consider: either 
		$w_2$ is below $x_1$ (Figure~\ref{Fig9}b), or
		$w_2$ is not below $w_1$ (Figure~\ref{Fig9}c). 
		In both cases, the reticulation node set $\{x_1,x_2\}$ are below the node $u$, and satisfies the condition in Lemma~\ref{lemma22}, so $u$ is not visible with respect to each leaf $\ell$ below either $x_1$ or $x_2$. For any leaf $\ell$ not below $x_1$ or $x_2$,   the path $P_T(\rho, \ell)$ avoids $u$. Hence $u$ is not visible in $N$.
		
		There are some observations from this result. First, there is no non-cross edge $(u_1,v_1)$ jumping over $w_1$, otherwise $v_1$ is not visible. 
		Second, a child of $w_1$ in $P_{t3}$ is not visible, and so the cross path $C_1$ is simply an edge. 
Otherwise by  the fact (a) of Lemma~\ref{SC_2}, the two children of $w_1$ are both not visible in $N$. 
		
		By (b) of  Lemma~\ref{SC_2}, either $(t, x_1)$ or $(t, x_2)$ is an edge in $T$. If $t$ is the parent of $x_1$ in $T$, then $w(P_{t1})=1$ according to Eqn.~(\ref{SCformula}).  By (a), $w(P_{t1}) + w(P_{t2}) \leq 7$. 
 If $t$ is the parent of $x_2$,   $w(P_{t1}) \leq 3$ and $w(P_{t2}) \leq 4$ according to  the remark at the end of the proof of  (a). Taken together, both facts imply that $w(P_{t1}) + w(P_{t2}) \leq 7$, which contradicts the assumption that $w(P_{t1}) + w(P_{t2}) \geq 8$. 
		
		{\bf Case 3}. $w_1$ is below $x_2$ and $w_2$ is below $x_1$ (Figure~\ref{Fig9}d).
		
This case is impossible since there is a cycle in $N$, contradicting $N$ is acyclic.

		To sum up, $w(P_{t1}) + w(P_{t2}) \geq 8$ implies that either (i) or (ii) is true. But in both cases, if we let $u$ be either $t$ or any internal node in $P_{t3}$, the set of reticulations $\{x_1, x_2\}$ are below $u$ and satisfies the condition in Lemma~\ref{lemma22}. Therefore, $t$ and any internal node of $P_{t3}$ are not visible.
		
		There are two observations from this result. First, $w(P_{t3}) = 0$ because there is no cross path that ends at $P_{t3}$.  (Otherwise the cross path enters $P_{t3}$ at a  reticulation that is not visible in $N$.)  Second, the child of $p(t)$  in  $P_{t3} = P_{p(t)1}$ is not visible. 
		
		Clearly,  $w(P_{p(t)2}) = 0$ if there is no cross path that ends at $P_{p(t)2}$.  Assume there is a cross path $C_3$ from $w_3$ to $x_3$ with $x_3$ in $P_{p(t)2}$. By (c) of Lemma~\ref{SC_2}, each  internal node in $P_T(p(t),x_3)$ is not visible.  But  the child of $p(t)$ in $P_{p(t)1}$ is not visible. Hence, $(p(t),x_3)$ is an edge in $T$. Then,  by the remark in the end of proof of (a), $w(P_{p(t)2}) \leq 4$.
		
		(d) If $P_{t3} = P_0$, then $t$ is an ancestor of any degree-3 node in $T$. Since $N$ is acyclic, there does not exist any cross path $P(u,v) \in \mathcal{P}$ such that $u \in P_i$ for $i>0$ and $v \in P_0$. Hence,  $w(P_0)=0$.  

 If $w(P_{t1}) + w(P_{t2}) \geq 8$, then, every node in $P_0$ is not visible in $N$,  shown in 
(c). This contradicts that $P_0$ contains the network root $\rho(N)$ and $\rho(N)$ is visible with respect to each leaf in $N$. 
		\qed				
	\end{proof}
	
				\begin{figure}[!t] 
					\begin{center}
						\includegraphics*[width=0.8\textwidth]{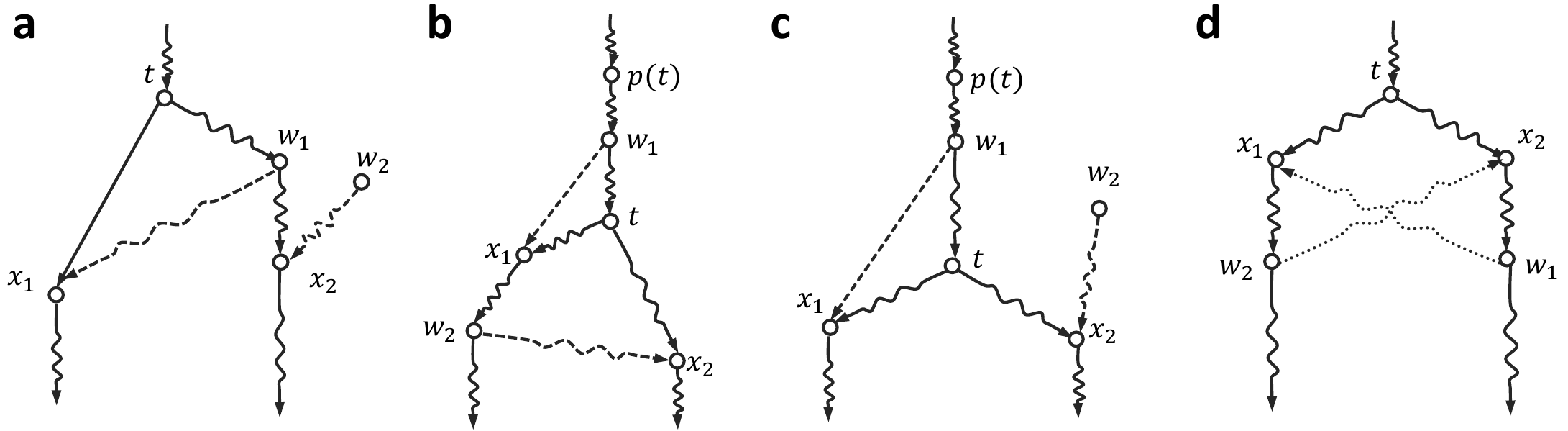}	
					\end{center}
					\caption{Four cases that are considered in the proof of Proposition~\ref{SC_3}.  
(\textbf{a}) $w_1$ is between $t$ and $x_2$. (\textbf{b}) $w_1$ is the trivial path $P_{t3}$ entering $t$, and $w_2$ is below $x_1$. (\textbf{c}) $w_1$ is in $P_{t3}$, and $w_2$ is not below $x_1$. (\textbf{d}) $w_1$ is below $w_2$ and $w_2$ is also below $x_1$. This is impossible in a network   \label{Fig9}}
				\end{figure}

	\begin{theorem}
		Let $N$ be a stable-child network with $n$ leaves. Then, 
		$|\mathcal{R}(N)| \leq 7 (n-1).$
	\end{theorem}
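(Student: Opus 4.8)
The plan is to run the same counting scheme used for the $3(n-1)$ bound of nearly-stable networks in Section~\ref{sec:NearlyStable}, now fed with the wider weight ranges supplied by Proposition~\ref{SC_3}. I would keep the objects already built above: $N'=N-E$, the subtree $T$, the trivial paths $P_0,P_1,\dots,P_{2n-2}$ of $T$, and the weight function $w(\cdot)$, so that (as already observed) $|\mathcal{R}(N)|\le\sum_{i=0}^{2n-2}w(P_i)$ with $w(P_0)=0$. Let $V$ be the set of internal degree-$3$ nodes of $T$; as in Section~\ref{sec:Stable} we have $|V|=n-1$, and since every trivial path other than $P_0$ leaves exactly one node of $V$ (the two paths leaving $v\in V$ being $P_{v1},P_{v2}$), we get $\sum_{i=0}^{2n-2}w(P_i)=\sum_{v\in V}\bigl(w(P_{v1})+w(P_{v2})\bigr)$. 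I would then partition $V$ by $V_i=\{v\in V:w(P_{v1})+w(P_{v2})=i\}$; by Proposition~\ref{SC_3}(a)--(b) only $V_0,\dots,V_{10}$ occur, so it suffices to prove $\sum_{i=0}^{10}i\,|V_i|\le 7|V|$, equivalently $\sum_{i\ge 8}(i-7)|V_i|\le\sum_{i\le 6}(7-i)|V_i|$.

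Next I would build the charging map exactly as in the nearly-stable argument. For $v\in V_8\cup V_9\cup V_{10}$, Proposition~\ref{SC_3}(d) rules out $P_{v3}=P_0$, so $P_{v3}$ starts at a node $p(v)\in V$; by Proposition~\ref{SC_3}(c), $P_{v3}=P_{p(v)1}$ with $w(P_{p(v)1})=w(P_{v3})=0$ and $w(P_{p(v)2})\le 4$, hence $p(v)\in V_0\cup V_1\cup V_2\cup V_3\cup V_4$. Two properties of the fibres of $p$ then close the argument. First, at most two nodes of $V_8\cup V_9\cup V_{10}$ map to a given node, since a degree-$3$ node of $T$ has exactly two trivial paths leaving it. Second, if $v'\ne v''$ both map to a node $w$, then $P_{v'3}$ and $P_{v''3}$ are precisely these two paths, and Proposition~\ref{SC_3}(c) forces $w(P_{v'3})=w(P_{v''3})=0$, so $w\in V_0$; thus every node of $V_1\cup\dots\cup V_4$ receives at most one charge. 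Now assign to each $v\in V_8\cup V_9\cup V_{10}$ the charge $\bigl(w(P_{v1})+w(P_{v2})\bigr)-7\in\{1,2,3\}$ and sum fibrewise: over a node of $V_0$ the total charge is at most $3+3=6\le 7-0$, and over a node of $V_k$ with $1\le k\le 4$ it is at most $3\le 7-k$. Therefore
\[
\sum_{i\ge 8}(i-7)|V_i|\;\le\;\sum_{w\in\mathrm{im}(p)}\Bigl(7-\bigl(w(P_{w1})+w(P_{w2})\bigr)\Bigr)\;\le\;\sum_{i\le 6}(7-i)|V_i|,
\]
using that $\mathrm{im}(p)\subseteq V_0\cup\dots\cup V_4\subseteq V_0\cup\dots\cup V_6$ and all terms are nonnegative. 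Unwinding this inequality yields $|\mathcal{R}(N)|=\sum_{v\in V}\bigl(w(P_{v1})+w(P_{v2})\bigr)\le 7|V|=7(n-1)$.

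Essentially all of the real work has already been absorbed into Proposition~\ref{SC_3}; the theorem itself is just the bookkeeping above, and it goes through only because the slack $7-k$ at each landing level $k\le 4$ (which is $\ge 3$, and equals $7$ when $k=0$) always dominates the charge received there (at most $3$ per source, and doubled only over $V_0$, where $6\le 7$). The one delicate point — where I would slow down — is checking the two fibre properties of $p$ simultaneously against these slacks, i.e. verifying that the size-two fibres can occur only over $V_0$. This is the stable-child counterpart of the step $|V_4|\le 2|V_0|+|V_1|+|V_2|$ in the nearly-stable proof, and if preferred the computation can be written in exactly that style: expand $|\mathcal{R}(N)|=\sum_i i\,|V_i|$, subtract $7|V_i|$ from each summand, and substitute an upper bound for $|V_8|,|V_9|,|V_{10}|$ in terms of $|V_0|,\dots,|V_4|$ derived from the fibre properties of $p$.
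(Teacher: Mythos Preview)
Your proposal is correct and follows essentially the same approach as the paper: both partition $V$ into the classes $V_0,\dots,V_{10}$, use Proposition~\ref{SC_3}(c)--(d) to send each $v\in V_8\cup V_9\cup V_{10}$ to $p(v)\in V_0\cup\cdots\cup V_4$, observe that size-two fibres land only in $V_0$, and conclude $\sum_i i|V_i|\le 7(n-1)$. Your presentation via charges and slacks is just a repackaging of the paper's explicit inequality chain $|V_8|+2|V_9|+3|V_{10}|\le 6|V_0|+3\sum_{i=1}^4|V_i|$.
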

	\begin{proof}
				Let $V$ denote the set of the $(n-1)$ internal nodes of degree 3 in $T$. Define
				$$ V_i = \{v \in V | w(P_{v1}) + w(P_{v2}) = i \}. $$
By Proposition~\ref{SC_3} (b), $w(P_{v1}) + w(P_{v2})\leq  10$. Hence,  
$V = \uplus_{i=0}^{10} V_i,$ and thus 
$\sum^{10}_{i=0}|V_i|=n-1$.

Let $p(v)$ be the start node of the trivial path entering $v$ in $T$. By Proposition~\ref{SC_3} (c) and (d),  if  $v \in V_j$, then $p(v) \in V_0\cup V_1\cup V_2\cup V_3\cup V_4$  for each $j \in \{8,9,10\}$.  By Proposition~\ref{SC_3} (c), under the mapping $p(\cdot)$, at most two nodes in $V_8\cup V_9\cup V_{10}$ are mapped to the same node in $V_0$, and only one node can be mapped to $V_1\cup V_2\cup V_3\cup V_4$. Thus,  
				\begin{equation*}
				|V_8| + 2|V_9| + 3|V_{10}| 
 \leq 3(|V_8| + |V_9| + |V_{10}|)\leq 6 |V_0| + 3(\sum_{i=1}^4 |V_i|).
				\end{equation*}
Since 	$w(P_0)=0$, 	the above inequality implies that 
			\begin{eqnarray*}
				|\mathcal{R}(N)| &= &\sum _{v\in V}[w(P_{v1})+w(P_{v2})]\\
				&=&     \sum_{i=0}^{10} i |V_i| \\
				&\leq&  |V_1|+2|V_2|+3|V_3|+4|V_4| +  7 \sum_{i=5}^{10}  |V_i| +  |V_8| + 2|V_9| + 3|V_{10}|) \\	
	&\leq &  |V_1|+2|V_2|+3|V_3|+4|V_4| +  7 \sum_{i=5}^{10}  |V_i| + 6 |V_0| +  3\sum_{i=1}^4 |V_i|  \\		
				&=& 6|V_0|+4|V_1|+5|V_2|+6|V_3|+7|V_4| +  7 \sum_{i=5}^{10}  |V_i| \\
				&\leq&   7\sum_{i=0}^{10} |V_i|\\
				&=&      7(n-1).
		\end{eqnarray*}
	\qed	
	\end{proof}

\section{ Conclusion}
\label{conc}

We have established the tight upper bounds for the sizes of galled,  nearly-stable, and stable-child networks. Since the number of internal tree nodes is equal to the number of leaves plus the number of reticulation nodes in a binary network,  we can summarize our results in Table~\ref{table1}.  Without question, these tight bounds provide insight to the study of combinatorial and algorithmic aspects of the  network classes defined by visibility property. 

\begin{table}[h]
\caption{The tight upper bounds on the sizes of binary networks with $n$ leaves defined by  visibility property. The bound for reticulation visible network is found in Bordewich and Semple(2015). 
\label{table1}}
\begin{tabular}{lcc}
\hline
   & No. of reticulation nodes & No. of internal tree nodes \\
\hline
Galled Network &  $2(n-1)$ & $3(n-1)$\\
Reticulation visible network & $3(n-1)$ & $4(n-1)$\\
Nearly-stable network & $3(n-1)$ & $4(n-1)$\\
Stable-child network & $7(n-1)$ & $8 (n-1)$\\
\hline
\end{tabular} 
\end{table}

\end{document}